\documentclass[lettersize,journal]{IEEEtran}
\usepackage{cite}
\usepackage{amsmath,amssymb,amsfonts}
\usepackage{amsthm}
\usepackage{algorithm}
\usepackage{algpseudocode}
\usepackage{graphicx}
\usepackage{textcomp}
\usepackage{xcolor}
\usepackage{hyperref}
\usepackage{booktabs}
\usepackage{multirow}
\usepackage[table]{xcolor} 
\usepackage{colortbl}
\usepackage{subfig, siunitx, adjustbox}
\usepackage{array}
\usepackage{url}
\usepackage{adjustbox}
\usepackage{verbatim}
\usepackage{graphicx}
\usepackage{cite}
\usepackage{bm}
\usepackage{tabularx}

\newtheorem{theorem}{Theorem}
\newtheorem{lemma}{Lemma}

\begin{document}

\title{Self-Supervised Pretraining for Transmission-Relevant\\MIMO CSI Representation}

\author{Sehyun Ryu, Yumin Kim, Minjae Lee, and Hyun Jong Yang,~\IEEEmembership{Senior Member,~IEEE}, \\John M. Cioffi,~\IEEEmembership{Life Fellow,~IEEE}

\thanks{Corresponding authors: H. J. Yang} 
\thanks{S. Ryu is with the Department of Electrical Engineering, Pohang University of Science and Technology, Pohang, Republic of Korea, and also with the Institute of New Media and Communications, Seoul National University, Seoul, Republic of Korea (e-mail: sh.ryu@postech.ac.kr).}
\thanks{Y. Kim, M. Lee, and H. J. Yang are with the Department of Electrical and Computer Engineering, Seoul National University, Seoul, Republic of Korea. H. J. Yang is also with the Institute of New Media and Communications, Seoul National University (e-mail: 
yumin0107@snu.ac.kr; lmjlmj2580@snu.ac.kr; hjyang@snu.ac.kr).}
\thanks{J. M. Cioffi is with the Department of Electrical Engineering, Stanford University, Stanford, CA 94305 USA (e-mail: cioffi@stanford.edu).}
}

\markboth{Submitted to IEEE Transactions on Wireless Communications}{Ryu \MakeLowercase{\textit{et al.}}: Beam-Response Contrastive Learning}


\maketitle

\begin{abstract}
Self-supervised representation learning from unlabeled channel state information (CSI) can reduce labeling and adaptation overhead in learning-based multiple-input multiple-output (MIMO) systems.
Existing CSI pretraining methods typically use reconstruction objectives or contrastive pairs from generic augmentations, which do not explicitly reflect transmission-relevant channel similarity.
This paper proposes beam-response contrastive learning (BRCL), a self-supervised CSI pretraining framework based on the transmit-side Gram matrix.
For a channel matrix $\mathbf{H}$, $\mathbf{R}=\mathbf{H}^{\mathrm{H}}\mathbf{H}$ determines the received power of any unit-norm transmit beam $\mathbf{w}$ as $|\mathbf{H}\mathbf{w}|_2^2=\mathbf{w}^{\mathrm{H}}\mathbf{R}\mathbf{w}$.
BRCL maps each CSI sample to a beam-response profile and uses the induced soft similarity as a label-free relational target for contrastive pretraining.
Combined with reconstruction learning, BRCL enforces both sample-level CSI recovery and beam-response-level consistency, yielding transferable CSI representations without task-specific labels or manual positive pairs.
Experiments on diverse MIMO channel datasets show that BRCL improves label efficiency and outperforms autoencoder- and channel-charting-based pretraining across beam selection, user selection, and future beam selection tasks.
\end{abstract}

\begin{IEEEkeywords}
Channel State Information, Channel Representation, Self-Supervised Learning, Channel Gram Matrix.
\end{IEEEkeywords}

\section{Introduction}
\label{sec:introduction}

Self-supervised learning (SSL) has emerged as a central paradigm for learning transferable representations from unlabeled data by deriving supervisory signals from the data~\cite{DOERSCH2015SSL, NOROOZI2016SSL, GIDARIS2018SSL}.
Contrastive learning (CL), in particular, learns representations by comparing samples in an embedding space using objectives such as InfoNCE~\cite{OORD2019INFONCE}.
Methods such as SimCLR and MoCo showed that two augmented views of the same sample can provide effective positive pairs for representation learning~\cite{CHEN2020SIMCLR, HE2020CONTRAST}.
Accordingly, the learned representation depends strongly on the similarity relation imposed during pretraining.
For wireless channel state information (CSI), this leads to a fundamental question: \emph{what physical relation between CSI samples should be preserved in the representation space?}

Wireless SSL methods differ primarily in how this relation is defined.
One line constructs positive pairs through perturbations or paired observations.
Time--frequency augmentations, multi-receiver observations, and CSI-domain transformations have been adopted for wireless sensing and activity recognition~\cite{LIU2021SSLW, XU2022SSL, YANG2023AUTOFI}.
Self-supervised invariant representations have also been learned from unlabeled CSI and adapted to localization with limited labels~\cite{SALIHU2023SSLWL}.
For beam management, cross-band correspondence has been exploited to support mmWave beam decisions using sub-6 GHz channels~\cite{CHAFAA2022SSLW}.
These methods demonstrate the value of wireless CL when an appropriate invariance or cross-view correspondence is available.
However, the resulting similarity is determined mainly by the selected augmentation, observation pairing, or application-specific assumption, rather than by the communication behavior of distinct CSI samples.

A second line of work uses masked modeling or reconstruction-based pretext tasks.
Motivated by masked language and image modeling~\cite{MELAMUD2016C2V, DELVIN2019BERT, HE2022IMGMSK, BAO2022BEIT}, these methods treat CSI, channel impulse response (CIR), or other radio measurements as structured arrays whose missing components are to be recovered.
Masked prediction has been applied to channel extrapolation, wireless positioning, and radio foundation modeling~\cite{WU2024MASKED, OTT2024RFM, WANG2025MASKED, GULER2025MASKED, LIU2026MASKED}, as well as to multiple-input multiple-output (MIMO) channel feedback, interpolation, and prediction~\cite{GAO2025SSNET, BANERJEE2025MASKED, JIANG2026MASKED}.
By design, however, reconstruction losses primarily reward raw CSI fidelity.
Such fidelity is not always aligned with transmitter-side decisions: CSI samples with different entry-wise reconstruction errors may produce similar beam responses, whereas samples that are close under a reconstruction loss may lead to different precoding behavior.

\def\BRYes{O}
\def\BRNo{X}
\def\BRPartial{\ensuremath{\triangle}}

\begin{table*}[ht]
\centering
\caption{Comparison of representative CSI representation SSL approaches.}
\label{tab:scope_comparison}
\renewcommand{\arraystretch}{1.08}
\setlength{\tabcolsep}{3.0pt}
\scriptsize
\begin{tabular}{@{}p{0.235\textwidth}p{0.265\textwidth}ccccc@{}}
\hline
\multirow{1}{*}[3.6pt]{\textbf{Research Line}}
& \multirow{1}{*}[3.6pt]{\textbf{Supervisory Signal / Similarity}}
& \textbf{\shortstack{Self-\\Supervised}}
& \textbf{\shortstack{Comm.\\Tasks}}
& \textbf{\shortstack{Distinct-CSI\\Relation}}
& \textbf{\shortstack{Beam-Response\\Relation}}
& \textbf{\shortstack{Soft\\Target}}
\tabularnewline
\hline

Augmented-View CL~\cite{LIU2021SSLW, XU2022SSL, YANG2023AUTOFI}
& Hard positives from augmented CSI views
& \BRYes & \BRPartial & \BRPartial & \BRNo & \BRNo
\tabularnewline

Task-Oriented SSL~\cite{SALIHU2023SSLWL, CHAFAA2022SSLW}
& Task-specific invariance or cross-band objective
& \BRYes & \BRPartial & \BRPartial & \BRNo & \BRNo
\tabularnewline

Masked/Reconstruction SSL~\cite{WU2024MASKED, OTT2024RFM, WANG2025MASKED, GULER2025MASKED, LIU2026MASKED, GAO2025SSNET, BANERJEE2025MASKED, JIANG2026MASKED}
& Masked CSI/CIR/radio recovery
& \BRYes & \BRPartial & \BRNo & \BRNo & \BRNo
\tabularnewline

Wireless Foundation Models~\cite{JIAO2024WFM, JIANG2025WFM, GULER2025MASKED, ABOULFOTOUH2025WFMICC, ABOULFOTOUH2025WFM, JIANG2026WFM, XU2025WIFISSL}
& Multimodal alignment or masked recovery
& \BRYes & \BRPartial & \BRPartial & \BRNo & \BRPartial
\tabularnewline

Wi-Fi CSI Sensing SSL~\cite{SHEN2024WIFISSL, GAO2024WIFISSL, LIU2026WFMWIFI, LIU2026WISCAL}
& Sensing-specific invariances
& \BRYes & \BRNo & \BRPartial & \BRNo & \BRNo
\tabularnewline

Channel Charting~\cite{STUDER2018CCHT, FERRAND2021TRIPCCHT, MARGOAROU2021, FERRAND2023MAG, CHAAYA2024CSI}
& Radio-geometry or topology relation
& \BRYes & \BRPartial & \BRYes & \BRNo & \BRPartial
\tabularnewline

\hline
\textbf{BRCL}
& Beam-response soft relation
& \BRYes & \BRYes & \BRYes & \BRYes & \BRYes
\tabularnewline
\hline
\multicolumn{7}{l}{O, $\triangle$, and X denote yes, partial, and no, respectively. Here, ``partial'' means that the property is addressed only in some works or only indirectly.}
\end{tabular}
\vspace{-10pt}
\end{table*}

Recent wireless foundation models extend these ideas toward representations that can be reused across scenarios and tasks.
Multi-modal pretraining has aligned CSI with physical or scattering-environment information~\cite{JIAO2024WFM}, while CIR--CSI consistency has been used to learn shared representations for positioning, beam management, and channel identification~\cite{JIANG2025WFM}.
Other studies combine masked reconstruction and CL to obtain task-agnostic channel representations or unified radio representations for sensing, communication, and localization~\cite{GULER2025MASKED, ABOULFOTOUH2025WFMICC, ABOULFOTOUH2025WFM, JIANG2026WFM, XU2025WIFISSL}.
In Wi-Fi CSI sensing, related SSL and contrastive objectives have improved representation transfer, label efficiency, and few-shot generalization~\cite{SHEN2024WIFISSL, GAO2024WIFISSL, LIU2026WFMWIFI, LIU2026WISCAL}.
These studies show growing interest in reusable wireless representations, but their supervisory signals still arise mainly from reconstruction, cross-modal consistency, or application-specific invariance.
Consequently, the relation that should organize distinct CSI samples for communication-oriented downstream tasks remains largely implicit.

Channel charting provides a complementary approach in which the inter-sample relation is defined more explicitly.
It maps high-dimensional CSI into a low-dimensional chart while preserving local radio geometry without explicit position labels~\cite{STUDER2018CCHT}.
Subsequent methods introduced temporal triplet constraints, phase-insensitive distances, and predictive latent dynamics to organize or predict channel evolution~\cite{FERRAND2021TRIPCCHT, MARGOAROU2021, FERRAND2023MAG, CHAAYA2024CSI}.
The resulting similarity, however, is primarily geometric or topological.
For transmitter-side decisions, geometric proximity does not necessarily imply similar beamforming behavior.
Nearby users may experience different blockage, angular spreads, or path gains, whereas geographically separated users may produce similar responses to a transmit beam codebook.

Taken together, existing methods organize CSI representations through augmentation-induced invariance, reconstruction fidelity, cross-view consistency, or radio geometry.
What remains less explored is a similarity relation defined directly by the physical quantities that govern transmitter-side communication decisions.
To address this gap, this paper takes a precoding-oriented view of self-supervised CSI representation learning.
This perspective is not intended to provide a universal similarity for every wireless task.
Instead, it focuses on AI-based MIMO tasks in which CSI representations support transmitter-side decisions, including beam selection, precoder design, user selection, and beam-response prediction.

For a MIMO channel matrix $\mathbf{H}$ and a transmit beam $\mathbf{w}$, the beamformed received power is given by
\begin{equation}
    \|\mathbf{H}\mathbf{w}\|_2^2
    =
    \mathbf{w}^{\mathrm{H}}
    \mathbf{H}^{\mathrm{H}}
    \mathbf{H}
    \mathbf{w}.
\end{equation}
The transmit-side Gram matrix $\mathbf{H}^{\mathrm{H}}\mathbf{H}$ therefore determines the channel response to any transmit beam and provides a direct physical basis for precoding-relevant CSI similarity~\cite{TELATAR1999}.
This view is consistent with classical MIMO and massive MIMO principles, in which channel Gram matrices, subspaces, and eigenspaces govern multicarrier MIMO beamforming, limited-feedback precoding, reduced-dimensional CSIT, and user grouping~\cite{PALOMAR2003JOINT, LOVE2008LMF, ADHIKARY2013JSDM}.
Accordingly, the frequency-resolved transmit-side Gram-matrix profile serves as a self-supervised relational target for structuring the CSI representation space.

Based on this principle, this paper proposes \emph{beam-response contrastive learning} (BRCL), a self-supervised framework for MIMO orthogonal frequency-division multiplexing (OFDM) CSI representation learning.
BRCL computes a beam-response dissimilarity between CSI samples from their transmit-side Gram-matrix profiles and converts the resulting pairwise relations into a soft target distribution over each mini-batch.
Within the conventional two-view CL framework, augmented views of the same CSI sample remain explicit positive pairs, while distinct samples are weighted according to the similarity of their transmit beam-response profiles.
BRCL thus retains augmentation-induced instance invariance while adding a precoding-relevant relation that is not directly enforced by generic augmentations, masked reconstruction, or geometry-preserving charting objectives.
When combined with reconstruction learning, BRCL jointly promotes sample-level CSI recovery and beam-response-level relational consistency, yielding compact and transferable representations without task-specific labels or manually designed inter-sample positive pairs.
The beam-selection and user-selection tasks assess whether compact CSI representations preserve decision-relevant information, while future beam selection evaluates temporal prediction from past CSI, corresponding to task-oriented CSI compression and channel prediction, respectively~\cite{CHOI2026COMP, LEE2026IMPORTANCE, RYU2026DMRS}.
Table~\ref{tab:scope_comparison} summarizes the supervisory signals and similarity notions used in the related CSI representation-learning approaches.

The main contributions of this paper are:

\begin{itemize}
\vspace{-0.3em}
\item A precoding-oriented CSI similarity is introduced by using the transmit-side Gram-matrix profile.
This similarity compares distinct CSI samples through their induced transmit beam-response profiles and provides a physically meaningful basis for CSI representation learning for transmitter-side decisions.

\item Beam-response contrastive learning (BRCL) is proposed to incorporate this similarity into self-supervised MIMO-OFDM CSI pretraining.
BRCL combines same-instance positive-pair contrastive learning with a soft relational loss over distinct CSI samples, and can be augmented with reconstruction learning to jointly promote instance-level invariance, beam-response-level consistency, and sample-level CSI recovery without task-specific labels or manual positive-pair design.

\item The learned representations are evaluated on beam selection, MU-MIMO user selection, and future beam selection.
Experiments on large-scale ray-tracing MIMO-OFDM datasets show that BRCL improves label efficiency and downstream transfer performance over autoencoder, CL, masked-modeling, and channel-charting-based pretraining baselines.
\end{itemize}

The rest of this paper is organized as follows. Section~\ref{sec:system_model} introduces the MIMO-OFDM channel model and formulates the self-supervised CSI representation learning problem. Section~\ref{sec:brcl} presents the proposed BRCL framework. Section~\ref{sec:theoretical_interpretation} provides the theoretical interpretation of the Gram-matrix-induced target. Section~\ref{sec:experiments} describes the experimental setup and presents the results. Finally, Section~\ref{sec:conclusion} concludes the paper.


\begin{figure*}
\centering
    \includegraphics[width=0.99\linewidth]{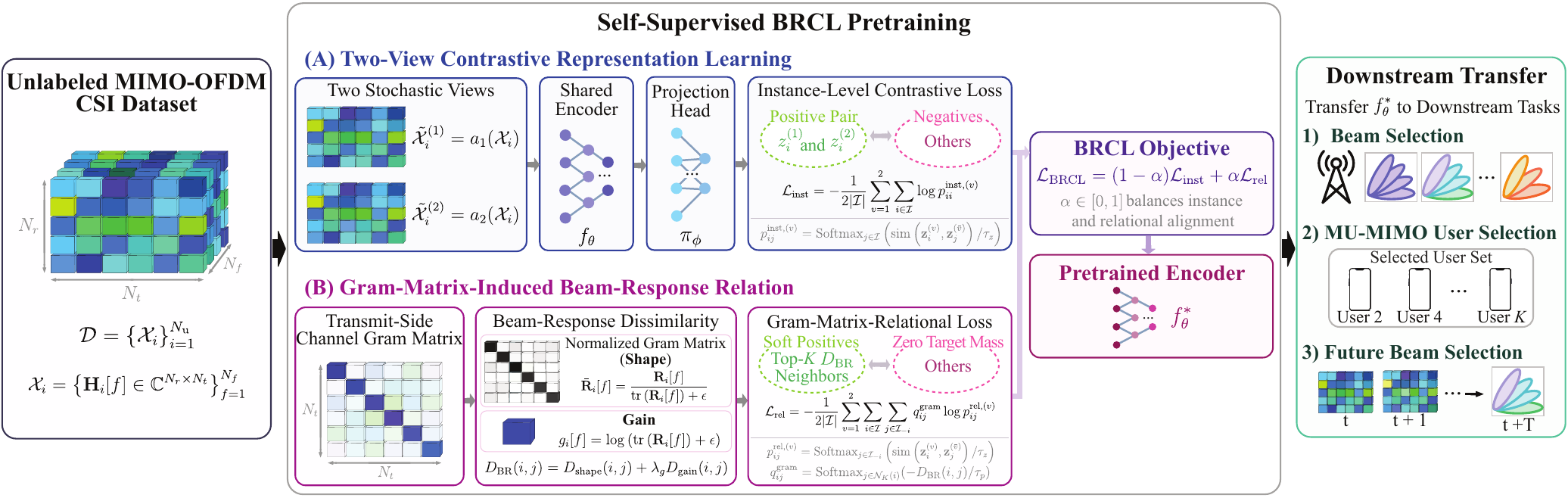}
    \caption{
    Overview of the proposed BRCL framework. BRCL pretrains a CSI encoder by combining two-view instance-level contrastive learning with Gram-matrix-induced relational learning. For each CSI sample, the transmit-side Gram matrix is used to form beam-response profiles, which are separated into normalized beam-response shape and gain components to define the beam-response dissimilarity $D_{\mathrm{BR}}$. The resulting dissimilarities provide soft relational targets, and the pretrained encoder $f_{\theta}^{*}$ is transferred to downstream wireless tasks.
    }
    \label{fig1}
    \vspace{-8pt}
\end{figure*}

\section{System Model and Problem Formulation}
\label{sec:system_model}

\subsection{MIMO-OFDM Channel Dataset}
\label{subsec:channel_dataset}

A downlink MIMO-OFDM system with $N_t$ transmit antennas and $N_r$ receive antennas is considered.
An unlabeled CSI dataset is assumed to be available for self-supervised pretraining:
\begin{equation}
    \mathcal{D}
    =
    \left\{
    \mathcal{X}_i
    \right\}_{i=1}^{N_{\mathrm{u}}},
    \label{eq:unlabeled_dataset}
\end{equation}
where $i$ denotes the sample index and $N_{\mathrm{u}}$ is the number of unlabeled CSI samples.

Each sample $\mathcal{X}_i$ represents an instantaneous MIMO-OFDM channel realization observed over $N_f$ subcarriers:
\begin{equation}
    \mathcal{X}_i
    =
    \left\{
    \mathbf{H}_i[f]
    \in
    \mathbb{C}^{N_r\times N_t}
    \right\}_{f=1}^{N_f},
    \label{eq:csi_sample}
\end{equation}
where $\mathbf{H}_i[f]$ is the channel matrix of the $i$-th sample at subcarrier $f$.
The narrowband flat-fading case is obtained by setting $N_f=1$, in which case a sample reduces to a single channel matrix $\mathbf{H}_i\in\mathbb{C}^{N_r\times N_t}$.

At subcarrier $f$, the received signal is modeled as
\begin{equation}
    \mathbf{y}_i[f]
    =
    \mathbf{H}_i[f]\mathbf{x}_i[f]
    +
    \mathbf{n}_i[f],
    \label{eq:received_signal}
\end{equation}
where $\mathbf{x}_i[f]\in\mathbb{C}^{N_t}$ is the transmitted signal vector and $\mathbf{n}_i[f]$ is additive noise. The CSI sample $\mathcal{X}_i$ is used as the input to the learning model.
In implementation, the complex-valued CSI can be represented by stacking its real and imaginary components, but this preprocessing step is omitted from the notation for clarity.

\subsection{Self-Supervised CSI Representation Learning}
\label{subsec:ssl_problem}

The objective of self-supervised CSI representation learning is to pretrain an encoder from the unlabeled dataset $\mathcal{D}$ without using task-specific labels. Let $f_{\theta}(\cdot)$ denote the CSI encoder that maps a CSI sample to a $d_{\mathrm{rep}}$-dimensional representation. For each CSI sample $\mathcal{X}_i\in\mathcal{D}$, the encoder produces
\begin{equation}
    \mathbf{u}_i
    =
    f_{\theta}(\mathcal{X}_i)
    \in
    \mathbb{R}^{d_{\mathrm{rep}}},
    \label{eq:encoder_output}
\end{equation}
where $\mathbf{u}_i$ denotes the learned representation of $\mathcal{X}_i$.
The notation $\mathbf{u}_i$ is adopted for the representation to avoid confusion with the channel matrix $\mathbf{H}_i[f]$.

After pretraining, the encoder is transferred to a downstream wireless task using a labeled dataset
\begin{equation}
    \mathcal{S}^{(r)}
    =
    \left\{
    \left(
    \mathcal{X}_j,
    \eta_j^{(r)}
    \right)
    \right\}_{j=1}^{|\mathcal{S}^{(r)}|},
    \label{eq:downstream_dataset}
\end{equation}
where $r$ denotes the downstream task index and $\eta_j^{(r)}$ is the corresponding task-specific target.
Depending on the task, $\eta_j^{(r)}$ may denote a codebook beam label, a MU-MIMO user-set utility or quality indicator, or a future beam-response target such as a future beam index or beam-gain profile.

A task-specific head $\ell_{\omega}^{(r)}(\cdot)$ is trained on top of the pretrained encoder. For a downstream CSI sample $\mathcal{X}_j\in\mathcal{S}^{(r)}$, the predicted target is
\begin{equation}
    \hat{\eta}_j^{(r)}
    =
    \ell_{\omega}^{(r)}
    \left(
    f_{\theta}
    \left(
    \mathcal{X}_j
    \right)
    \right).
    \label{eq:downstream_prediction}
\end{equation}
The downstream performance is evaluated by comparing $\hat{\eta}_j^{(r)}$ with the ground-truth target $\eta_j^{(r)}$ on held-out samples from $\mathcal{S}^{(r)}$.
Depending on the evaluation protocol, the encoder $f_{\theta}$ is either frozen to assess representation quality or fine-tuned jointly with the task head.
The effectiveness of the pretrained encoder is measured by its transferability and label efficiency across downstream tasks.

\subsection{Transmit-Response-Oriented Problem Formulation}
\label{subsec:tx_response_problem}

A key question in self-supervised CSI representation learning is which physical relation among unlabeled CSI samples should be preserved in the representation space.
Existing self-supervised approaches often treat CSI as a generic data array and define pretext objectives through raw channel reconstruction, masking, or augmentation-based instance discrimination.
While such objectives can learn useful features, they do not explicitly specify which channel properties should be preserved for communication decisions.
This work instead targets \emph{transmit-domain CSI representations}, where the relevant information is how the channel responds to candidate beams, precoders, and scheduled user groups, rather than how accurately the raw CSI entries can be reconstructed.
Accordingly, the representation is not designed to reconstruct the full CSI sample, but to preserve the information that determines transmit responses over OFDM subcarriers.

For a unit-norm transmit beam $\mathbf{w}\in\mathbb{C}^{N_t}$, the received power of CSI sample $\mathcal{X}_i$ at subcarrier $f$ is
\begin{equation}
    \gamma_i[f;\mathbf{w}]
    =
    \left\|
    \mathbf{H}_i[f]\mathbf{w}
    \right\|_2^2
    =
    \mathbf{w}^{H}
    \mathbf{R}_i[f]
    \mathbf{w},
    \label{eq:subcarrier_beam_gain}
\end{equation}
where
\begin{equation}
    \mathbf{R}_i[f]
    =
    \mathbf{H}_i[f]^{H}
    \mathbf{H}_i[f]
    \in
    \mathbb{C}^{N_t\times N_t}
    \label{eq:subcarrier_tx_covariance}
\end{equation}
is the transmit-side channel Gram matrix of sample $\mathcal{X}_i$ at subcarrier $f$.

Together, \eqref{eq:subcarrier_beam_gain} and \eqref{eq:subcarrier_tx_covariance} show that $\mathbf{R}_i[f]$ determines the transmit beam response of the channel at subcarrier $f$.
Accordingly, the frequency-resolved beam-response profile of sample $\mathcal{X}_i$ is defined as
\begin{equation}
    \mathcal{B}_i
    =
    \left\{
    \mathbf{R}_i[f]
    \right\}_{f=1}^{N_f}.
    \label{eq:beam_response_profile}
\end{equation}
The profile $\mathcal{B}_i$ characterizes the transmitter-side response of an instantaneous MIMO-OFDM CSI sample across frequency.
Since different subcarriers may exhibit different gains, ranks, and preferred transmit directions, preserving this frequency-resolved response is important for OFDM precoding, power allocation, and feedback-related decisions.

The proposed problem is to learn an encoder $f_{\theta}$ from the unlabeled dataset $\mathcal{D}$ such that the representation $\mathbf{u}_i=f_{\theta}(\mathcal{X}_i)$ preserves the transmit-response geometry induced by $\mathcal{B}_i$. In this formulation, two CSI samples should be close in the learned representation space when their beam-response profiles induce similar beam gains over subcarriers and transmit beams. Thus, similarity in $\mathcal{B}_i$ implies that the transmitter can treat the corresponding channels similarly for beamforming-oriented decisions, even without reconstructing the full CSI samples. BRCL exploits this transmit-response relation to learn compact, label-free, and communication-relevant CSI representations.

\section{Beam-Response Contrastive Learning}
\label{sec:brcl}

\begin{figure}
\centering
    \includegraphics[width=0.75\linewidth]{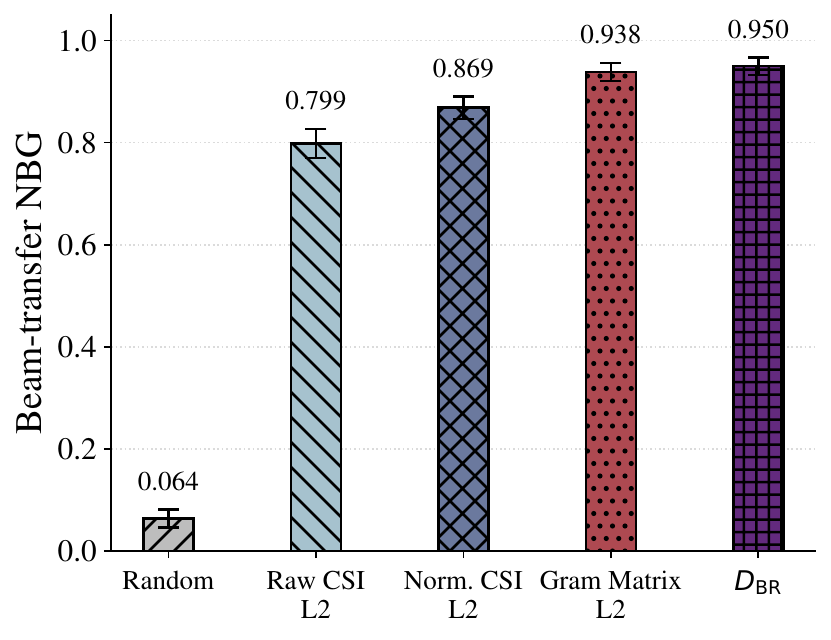}
    \vspace{-7pt}
    \caption{
    Beam-transfer NBG of nearest-neighbor CSI samples under different similarity metrics, evaluated on the DeepMIMO dataset.
    }
    \label{fig_motivation}
    \vspace{-9pt}
\end{figure}

This section presents the proposed BRCL framework.
BRCL aims to learn CSI embeddings whose geometry reflects the transmit-side physical behavior of MIMO-OFDM channels.
Instead of defining sample similarity solely through generic augmentations or channel-domain reconstruction, BRCL constructs a soft relational target from the transmit-side Gram-matrix profile of each channel realization, which characterizes CSI similarity via subcarrier-wise beam responses.
To this end, BRCL combines instance-level contrastive learning with physically grounded relational learning.
Augmented views of the same CSI sample are aligned to ensure representation invariance, while CSI samples with similar Gram-matrix-induced beam-response profiles are encouraged to have nearby embeddings through a Gram-matrix-relational objective.
An optional $\ell_2$ reconstruction loss can further be incorporated to preserve channel-domain information.
As a result, BRCL retains the stability of instance discrimination while organizing the embedding space according to transmission-relevant channel similarity.

\subsection{CSI Augmentation and Encoder}
\label{subsec:encoder_aug}

For each CSI sample $\mathcal{X}_i\in\mathcal{D}$, two stochastic views are generated as
\begin{equation}
    \tilde{\mathcal{X}}_i^{(1)}
    =
    a_1(\mathcal{X}_i),
    \qquad
    \tilde{\mathcal{X}}_i^{(2)}
    =
    a_2(\mathcal{X}_i),
    \label{eq:augmented_views}
\end{equation}
where $a_1(\cdot)$ and $a_2(\cdot)$ denote independently sampled CSI augmentations.
This two-view construction follows the standard CL protocol, where independently perturbed views of the same sample form an explicit positive pair~\cite{CHEN2020SIMCLR}.
In the CSI setting, such augmentations can model noisy or partially observed channel measurements, as commonly adopted in wireless SSL through channel perturbations such as noise injection and masking~\cite{GULER2025MASKED}.
The resulting positive pair encourages the encoder to learn representations that are invariant to measurement-level perturbations while preserving the identity of the underlying channel realization.

The physical relation among different CSI samples, however, is not defined by the augmentations. Instead, it is defined by the Gram-matrix-induced beam-response similarity introduced in the next subsection.
Thus, BRCL retains the stability of two-view instance-level CL while injecting transmitter-side physical structure through a soft relational target.

Let $f_{\theta}(\cdot)$ denote the CSI encoder and $\pi_{\phi}(\cdot)$ denote the projection head used during self-supervised pretraining. The encoder maps each augmented view to a representation
\begin{equation}
    \mathbf{u}_i^{(v)}
    =
    f_{\theta}
    \left(
        \tilde{\mathcal{X}}_i^{(v)}
    \right)
    \in
    \mathbb{R}^{d_{\mathrm{rep}}},
    \qquad
    v \in \{1,2\},
    \label{eq:encoder_view_output}
\end{equation}
where $d_{\mathrm{rep}}$ denotes the dimension of the encoder representation. The projection head then maps this representation to a normalized contrastive embedding
\begin{equation}
    \mathbf{z}_i^{(v)}
    =
    \frac{
        \pi_{\phi}
        \left(
            \mathbf{u}_i^{(v)}
        \right)
    }{
        \left\|
            \pi_{\phi}
            \left(
                \mathbf{u}_i^{(v)}
            \right)
        \right\|_2
    }
    \in
    \mathbb{R}^{d_z},
    \qquad
    v \in \{1,2\},
    \label{eq:projected_embedding}
\end{equation}
where $d_z$ denotes the dimension of the projection space in which the contrastive objectives are applied. The projection head is used only during pretraining, and the encoder representation $\mathbf{u}_i^{(v)}$ is used for downstream tasks.

\begin{figure*}[t]
\captionsetup[subfloat]{farskip=2pt}
\vspace{-7pt}
\centering
        \subfloat[SimCLR one-hot target]{\includegraphics[width=.3\linewidth]{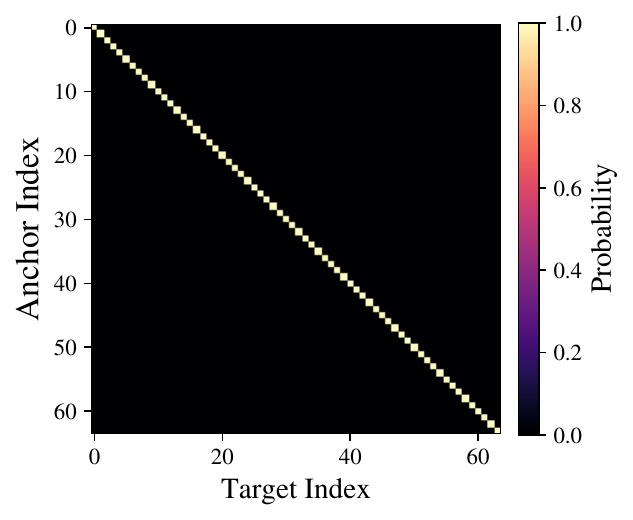}}
        \subfloat[Top-K soft target $q_{ij}^{\mathrm{gram}}$]{\includegraphics[width=.3\linewidth]{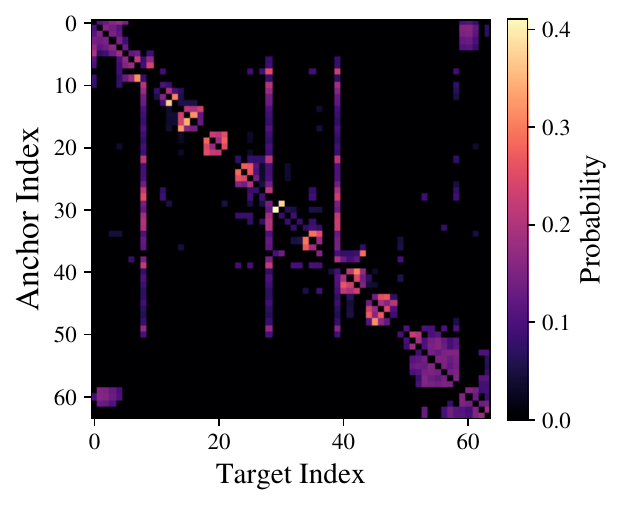}}
        \subfloat[Ranked target probabilities]{\includegraphics[width=.3\linewidth]{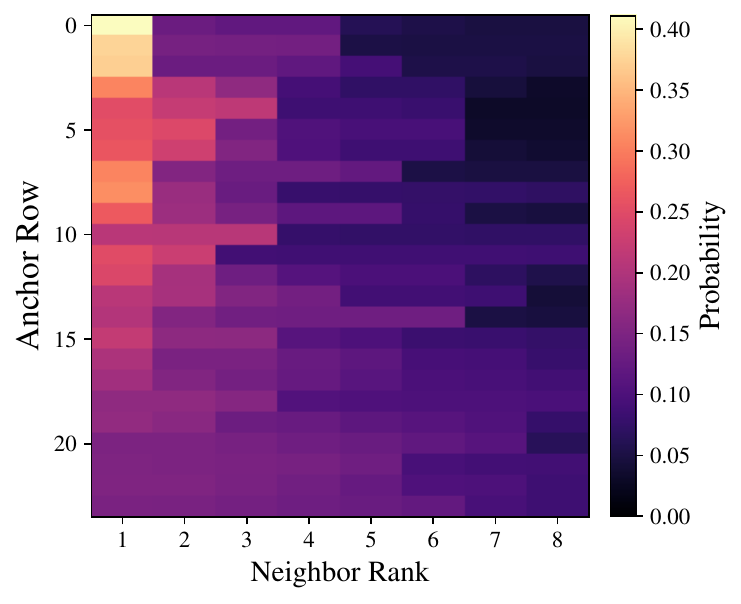}}
        \\
\vspace{-3pt}
\caption{Construction of the Gram-matrix-induced soft target ($q_{ij}^{\mathrm{gram}}$). Unlike the one-hot SimCLR target, BRCL assigns nonzero probability to the top-$K$ beam-response-similar CSI samples ($K=8$ in the figure), providing structured soft supervision over communication-relevant neighbors.
}
\vspace{-7pt}
\label{fig:q_gram}
\end{figure*}

\subsection{Soft Relational Contrastive Target}
\label{subsec:soft_target}


BRCL constructs a physical target distribution from the frequency-resolved beam-response profile defined in \eqref{eq:beam_response_profile}.
The transmit-side Gram matrix is first computed at each subcarrier as
\begin{equation}
\mathbf{R}_i[f]
=
\mathbf{H}_i[f]^{H}
\mathbf{H}_i[f],
\qquad
f=1,\ldots,N_f.
\label{eq:brcl_subcarrier_covariance}
\end{equation}
The corresponding subcarrier beam-gain component is defined as
\begin{equation}
g_i[f]
=
\log
\left(
\operatorname{tr}
\left(
\mathbf{R}_i[f]
\right)
+
\epsilon
\right),
\label{eq:subcarrier_gain_component}
\end{equation}
where $\operatorname{tr}\left(\mathbf{R}_i[f]\right)$ represents the aggregate transmit-side channel gain at subcarrier $f$, and $\epsilon>0$ is a small constant introduced for numerical stability.

To separate the relative beam-response shape from the absolute gain, the normalized Gram matrix is further defined as
\begin{equation}
\bar{\mathbf{R}}_i[f]
=
\frac{
\mathbf{R}_i[f]
}{
\operatorname{tr}
\left(
\mathbf{R}_i[f]
\right)
+
\epsilon
}.
\label{eq:normalized_subcarrier_covariance}
\end{equation}
Accordingly, $\bar{\mathbf{R}}_i[f]$ captures the relative transmit beam-response shape at subcarrier $f$, whereas $g_i[f]$ captures the corresponding absolute beam gain.

Since the dynamic range of the gain component can differ substantially from that of the normalized Gram-matrix shape component, the gain term is calibrated using statistics computed from the unlabeled pretraining dataset. Specifically, for each subcarrier index $f$, let
\begin{align}
    \mu_g[f]
    &=
    \frac{1}{N_u}
    \sum_{\mathcal{X}_n\in\mathcal{D}}
    g_n[f],
    \\
    \sigma_g[f]
    &=
    \left(
    \frac{1}{N_u}
    \sum_{\mathcal{X}_n\in\mathcal{D}}
    \left(
    g_n[f]-\mu_g[f]
    \right)^2
    \right)^{1/2},
    \label{eq:gain_statistics}
\end{align}
where $N_u=|\mathcal{D}|$ is the number of unlabeled CSI samples, and $\mu_g[f]$ and $\sigma_g[f]$ denote the empirical mean and standard deviation of the gain component at subcarrier $f$, respectively. The normalized gain component is then defined as
\begin{equation}
    \tilde{g}_i[f]
    =
    \frac{
    g_i[f]-\mu_g[f]
    }{
    \sigma_g[f]+\epsilon_g
    },
    \label{eq:normalized_gain_component}
\end{equation}
where $\epsilon_g>0$ is a small constant for numerical stability. The statistics $\mu_g[f]$ and $\sigma_g[f]$ are computed once from the unlabeled pretraining dataset and kept fixed throughout pretraining.

For two CSI samples $\mathcal{X}_i$ and $\mathcal{X}_j$, the Gram-matrix-shape dissimilarity is defined as
\begin{equation}
    D_{\mathrm{shape}}(i,j)
    =
    \frac{1}{N_f}
    \sum_{f=1}^{N_f}
    \left\|
    \bar{\mathbf{R}}_i[f]
    -
    \bar{\mathbf{R}}_j[f]
    \right\|_F^2,
    \label{eq:brcl_shape_distance}
\end{equation}
and the normalized gain dissimilarity is defined as
\begin{equation}
    D_{\mathrm{gain}}(i,j)
    =
    \frac{1}{N_f}
    \sum_{f=1}^{N_f}
    \left(
    \tilde{g}_i[f]-\tilde{g}_j[f]
    \right)^2.
    \label{eq:brcl_gain_distance}
\end{equation}
The beam-response dissimilarity is then given by
\begin{equation}
    D_{\mathrm{BR}}(i,j)
    =
    D_{\mathrm{shape}}(i,j)
    +
    \lambda_g
    D_{\mathrm{gain}}(i,j),
    \label{eq:brcl_distance}
\end{equation}
where $\lambda_g\geq 0$ controls the contribution of the normalized absolute gain component. The first term compares trace-normalized Gram matrices and captures the beam-response shape, including dominant transmit directions, rank structure, and relative eigenmode strengths. Since trace normalization removes the overall channel power, the second term reintroduces absolute gain information through $\tilde{g}_i[f]$. This decomposition prevents large-scale power differences from dominating the shape comparison, while incorporating path loss, shadowing, and blockage effects with a controlled scale.

To assess whether $D_{\mathrm{BR}}$ selects beam-decision-relevant neighbors, Fig.~\ref{fig_motivation} compares nearest-neighbor transfer across different CSI similarity metrics. For each anchor sample $i$, the neighbor $j$ selected by each metric provides its optimal beam $k_j^{\star}$, which is evaluated on the anchor channel using the normalized beamforming gain (NBG)
\begin{equation}
    \mathrm{NBG}(i \leftarrow j)
    =
    \frac{
        G_i(k_j^{\star})
    }{
        G_i(k_i^{\star})
    },
    \label{eq:nbg_motivation}
\end{equation}
where $k_i^{\star}$ is the anchor-optimal beam and $G_i(k)$ denotes the beamforming gain of beam $k$ for sample $i$. Thus, $1-\mathrm{NBG}(i \leftarrow j)$ represents the normalized regret of using the neighbor's beam. As shown in Fig.~\ref{fig_motivation}, $D_{\mathrm{BR}}$ achieves the highest NBG among the considered metrics, supporting its use for constructing beam-relevant relational targets in BRCL.

Motivated by this observation, $D_{\mathrm{BR}}$ is used during pretraining to identify local beam-response neighbors within each mini-batch. At each training iteration, a mini-batch is sampled as
\begin{equation}
    \mathcal{M}
    =
    \left\{
    \mathcal{X}_i
    \right\}_{i\in\mathcal{I}}
    \subset
    \mathcal{D},
\end{equation}
where $\mathcal{I}$ denotes the index set of samples included in the mini-batch. For an anchor sample $i\in\mathcal{I}$, let
\begin{equation}
    \mathcal{I}_{-i}
    =
    \mathcal{I}\setminus\{i\}
\end{equation}
denote the set of candidate indices excluding the anchor. The top-$K$ beam-response-neighbor set of anchor sample $i$ is then defined as
\begin{equation}
    \mathcal{N}_{K}(i)
    =
    \operatorname*{arg\,topK}_{j\in\mathcal{I}_{-i}}
    \left(
    -D_{\mathrm{BR}}(i,j)
    \right),
    \label{eq:topk_covariance_neighbor_set}
\end{equation}
where $\operatorname*{arg\,topK}$ returns the $K$ indices with the largest values of $-D_{\mathrm{BR}}(i,j)$, equivalently the $K$ smallest beam-response dissimilarities.

The Gram-matrix-induced soft relational target for anchor sample $i$ is then defined as
\begin{equation}
    q_{ij}^{\mathrm{gram}}
    =
    \begin{cases}
    \displaystyle
    \frac{
    \exp
    \left(
    -D_{\mathrm{BR}}(i,j)/\tau_p
    \right)
    }{
    \sum_{k\in\mathcal{N}_{K}(i)}
    \exp
    \left(
    -D_{\mathrm{BR}}(i,k)/\tau_p
    \right)
    },
    & j\in\mathcal{N}_{K}(i), \\[3ex]
    0,
    & \text{otherwise},
    \end{cases}
    \label{eq:cov_soft_target}
\end{equation}
where $\tau_p>0$ is a temperature that controls the sharpness of the physical similarity distribution induced by the beam-response distance.
The anchor itself is excluded from $q_{ij}^{\mathrm{gram}}$ because the same-instance positive relation is optimized separately through the instance-level contrastive loss.
The top-$K$ restriction assigns nonzero target probability only to the most similar Gram-matrix-induced beam-response neighbors, while the remaining mini-batch samples receive zero target mass and act as negatives through the embedding-space softmax normalization.

The beam-response descriptors can be precomputed for a fixed unlabeled dataset.
During training, BRCL only requires the pairwise beam-response dissimilarities $D_{\mathrm{BR}}(i,j)$ among samples in the mini-batch index set $\mathcal{I}$, and the corresponding top-$K$ Gram-matrix-neighbor sets $\mathcal{N}_K(i)$.
These quantities can be computed efficiently by flattening the normalized Gram matrix profiles across subcarriers, appending the normalized gain components, and applying vectorized matrix operations.


\subsection{Embedding-Space Similarity Distributions}
\label{subsec:embedding_similarity}

Given the normalized embeddings in \eqref{eq:projected_embedding}, BRCL defines two cross-view embedding-space similarity distributions: one for same-instance CL and one for Gram-matrix-relational learning.
Let $v\in\{1,2\}$ denote the anchor view and let $\bar{v}=3-v$ denote the opposite view.

For the instance-level contrastive loss, the probability assigned to candidate sample $j\in\mathcal{I}$ for anchor sample $i\in\mathcal{I}$ is
\begin{equation}
    p_{ij}^{\mathrm{inst},(v)}
    =
    \frac{
    \exp
    \left(
    \operatorname{sim}
    \left(
    \mathbf{z}_i^{(v)},
    \mathbf{z}_j^{(\bar{v})}
    \right)
    /\tau_z
    \right)
    }{
    \sum_{k\in\mathcal{I}}
    \exp
    \left(
    \operatorname{sim}
    \left(
    \mathbf{z}_i^{(v)},
    \mathbf{z}_k^{(\bar{v})}
    \right)
    /\tau_z
    \right)
    },
    \qquad
    j\in\mathcal{I}.
    \label{eq:instance_embedding_distribution}
\end{equation}
The same-instance probability $p_{ii}^{\mathrm{inst},(v)}$ is used to preserve instance-level invariance between the two augmented views of the same CSI sample.

For the Gram-matrix-relational loss, the same-instance candidate is excluded, and the off-diagonal embedding-space distribution is defined as
\begin{equation}
    p_{ij}^{\mathrm{rel},(v)}
    =
    \frac{
    \exp
    \left(
    \operatorname{sim}
    \left(
    \mathbf{z}_i^{(v)},
    \mathbf{z}_j^{(\bar{v})}
    \right)
    /\tau_z
    \right)
    }{
    \sum_{k\in\mathcal{I}_{-i}}
    \exp
    \left(
    \operatorname{sim}
    \left(
    \mathbf{z}_i^{(v)},
    \mathbf{z}_k^{(\bar{v})}
    \right)
    /\tau_z
    \right)
    },
    \qquad
    j\in\mathcal{I}_{-i}.
    \label{eq:relational_embedding_distribution}
\end{equation}
The distribution $p_{ij}^{\mathrm{rel},(v)}$ represents how the learned embedding space ranks distinct CSI samples in the opposite view relative to anchor sample $i$.
Since the embeddings are $\ell_2$-normalized, cosine similarity is equivalent to the inner product.

Fig.~\ref{fig:q_gram} illustrates the Gram-matrix-induced target distribution.
Unlike the one-hot SimCLR target, BRCL assigns soft probabilities to the top-$K$ CSI samples with the smallest beam-response dissimilarities, with larger weights given to more similar neighbors.
This structured target guides the embedding space to preserve communication-relevant inter-sample similarity beyond instance-level discrimination.

\subsection{BRCL Objective}
\label{subsec:brcl_objective}

BRCL optimizes instance-level invariance and Gram-matrix-induced relational learning as two separate objectives. 
This separation prevents the same-instance positive pair from competing with Gram-matrix-neighbor positives in a single mixed target distribution, while retaining both sources of self-supervision.

The instance-level contrastive loss is defined as
\begin{equation}
    \mathcal{L}_{\mathrm{inst}}
    =
    -
    \frac{1}{2|\mathcal{I}|}
    \sum_{v=1}^{2}
    \sum_{i\in\mathcal{I}}
    \log
    p_{ii}^{\mathrm{inst},(v)}.
    \label{eq:instance_loss}
\end{equation}
This term is equivalent to the standard SimCLR-style cross-view instance discrimination objective over the mini-batch.

The Gram-matrix-relational loss aligns the off-diagonal embedding-space distribution $p_{ij}^{\mathrm{rel},(v)}$ with the Gram-matrix-induced target distribution $q_{ij}^{\mathrm{gram}}$:
\begin{equation}
    \mathcal{L}_{\mathrm{rel}}
    =
    -
    \frac{1}{2|\mathcal{I}|}
    \sum_{v=1}^{2}
    \sum_{i\in\mathcal{I}}
    \sum_{j\in\mathcal{I}_{-i}}
    q_{ij}^{\mathrm{gram}}
    \log
    p_{ij}^{\mathrm{rel},(v)}.
    \label{eq:relational_loss}
\end{equation}
Because $q_{ij}^{\mathrm{gram}}$ is nonzero only for $j\in\mathcal{N}_{K}(i)$, only the top-$K$ Gram-matrix-induced beam-response neighbors contribute positive target mass. 
All other off-diagonal samples remain in the denominator of \eqref{eq:relational_embedding_distribution} and therefore act as negatives.

The final BRCL pretraining loss is
\begin{equation}
    \mathcal{L}_{\mathrm{BRCL}}
    =
    (1-\alpha)
    \mathcal{L}_{\mathrm{inst}}
    +
    \alpha
    \mathcal{L}_{\mathrm{rel}},
    \qquad
    0\leq\alpha\leq 1.
    \label{eq:brcl_loss}
\end{equation}
The parameter $\alpha$ controls the trade-off between instance-level invariance and physical relational learning. 
When $\alpha=0$, BRCL reduces to a SimCLR-style instance contrastive objective. 
When $\alpha=1$, the objective is determined entirely by the Gram-matrix-induced beam-response relation among distinct CSI samples. 
For $0<\alpha<1$, the encoder is trained to preserve both same-instance invariance and Gram-matrix-induced transmit-response relations.

This objective extends conventional instance discrimination with a communication-aware relational loss, rather than replacing the one-hot positive pair with a mixed target distribution.

\subsection{Overall Pretraining Objective}
\label{subsec:overall_objective}

To further preserve channel-domain information during pretraining, the BRCL objective can include an optional $\ell_2$ reconstruction regularizer.
Let $r_{\psi}(\cdot)$ denote a decoder used only during pretraining. For each original CSI sample $\mathcal{X}_i$, the encoder and decoder produce
\begin{equation}
    \mathbf{u}_i
    =
    f_{\theta}
    \left(
        \mathcal{X}_i
    \right),
    \qquad
    \hat{\mathcal{X}}_i
    =
    r_{\psi}
    \left(
        \mathbf{u}_i
    \right).
    \label{eq:reconstructed_csi}
\end{equation}
The reconstruction loss for a mini-batch $\mathcal{M}$ is defined as
\begin{equation}
    \mathcal{L}_{\mathrm{rec}}
    =
    \frac{1}{|\mathcal{I}|}
    \sum_{i\in\mathcal{I}}
    \left\|
        \hat{\mathcal{X}}_i
        -
        \mathcal{X}_i
    \right\|_F^2,
    \label{eq:reconstruction_loss}
\end{equation}
where $\|\cdot\|_F$ denotes the Frobenius norm. The overall pretraining loss is then given by
\begin{equation}
    \mathcal{L}_{\mathrm{pre}}
    =
    \mathcal{L}_{\mathrm{BRCL}}
    +
    \beta_{\mathrm{rec}}
    \mathcal{L}_{\mathrm{rec}},
    \label{eq:overall_pretraining_loss}
\end{equation}
where $\beta_{\mathrm{rec}}\geq 0$ controls the contribution of the reconstruction regularizer. Setting $\beta_{\mathrm{rec}}=0$ recovers the BRCL objective without reconstruction.

The encoder, projection head, and optional decoder are pretrained by minimizing the overall loss over mini-batches sampled from the unlabeled CSI dataset:
\begin{equation}
    (\theta^{\star},\phi^{\star},\psi^{\star})
    =
    \arg\min_{\theta,\phi,\psi}
    \;
    \mathbb{E}_{\mathcal{M}\sim\mathcal{D}}
    \left[
        \mathcal{L}_{\mathrm{pre}}
        \left(
            \mathcal{M};\theta,\phi,\psi
        \right)
    \right].
    \label{eq:overall_pretraining_objective}
\end{equation}
For each mini-batch, the Gram-matrix-induced target $q_{ij}^{\mathrm{gram}}$ is constructed from the beam-response dissimilarity $D_{\mathrm{BR}}(i,j)$ and treated as a parameter-independent physical target during backpropagation. The objective does not exploit task-specific labels. After pretraining, the projection head $\pi_{\phi}(\cdot)$ and the decoder $r_{\psi}(\cdot)$, if used, are discarded, and the pretrained encoder $f_{\theta^{\star}}(\cdot)$ is used to extract the transferable representation of a downstream CSI sample $\mathcal{X}$:
\begin{equation}
    \mathbf{u}
    =
    f_{\theta^{\star}}(\mathcal{X})
    \in
    \mathbb{R}^{d_{\mathrm{rep}}}.
    \label{eq:pretrained_representation}
\end{equation}



\begin{figure}[t]
\captionsetup[subfloat]{farskip=2pt}
\vspace{-7pt}
\centering
\subfloat[Global random pairs]{\includegraphics[width=.88\linewidth]{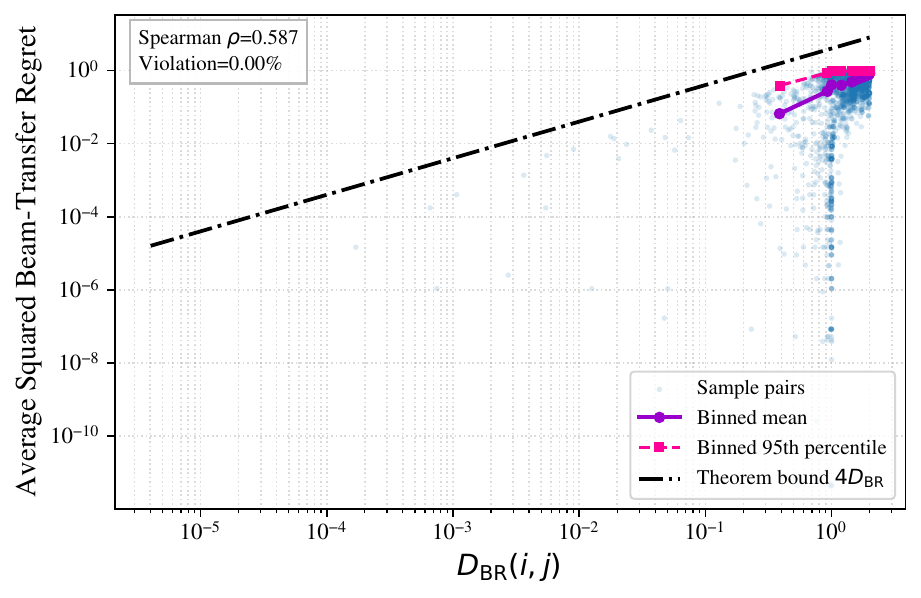}}\
\subfloat[Top-16 local $D_{\mathrm{BR}}$-neighbor pairs]{\includegraphics[width=.88\linewidth]{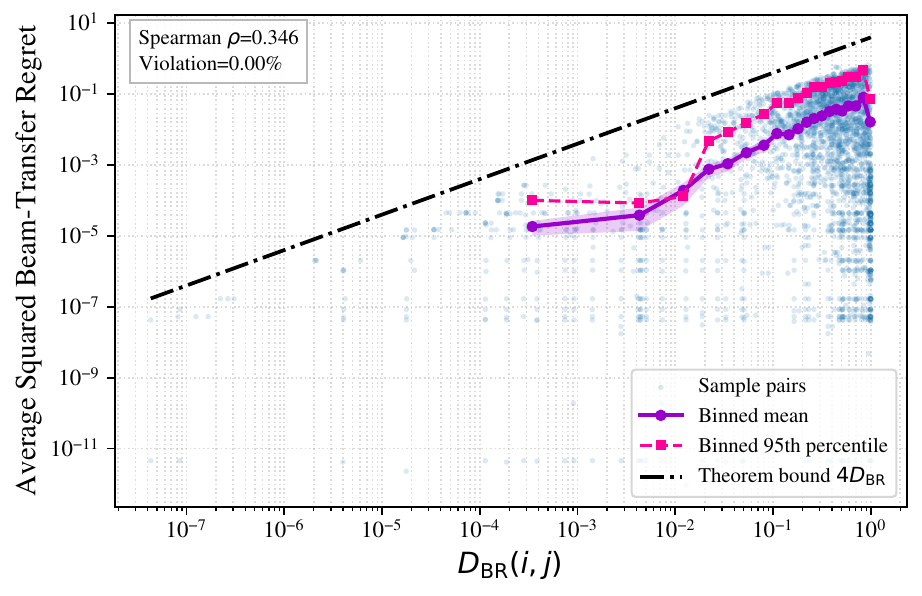}}
\vspace{-3pt}
\caption{Empirical evaluation of Theorem~\ref{thm:beam_decision_regret}. The average squared beam-transfer regret is plotted against $D_{\mathrm{BR}}(i,j)$ for global random CSI pairs and top-16 local $D_{\mathrm{BR}}$-neighbor pairs, with binned curves computed using 5\%-wide $D_{\mathrm{BR}}$ bins.}
\label{fig:dbr_regret_bound}
\vspace{-6pt}
\end{figure}

\section{Theoretical Interpretation}
\label{sec:theoretical_interpretation}

This section provides a physical interpretation of the Gram-matrix-induced target used in BRCL. 
The purpose of the analysis is not to establish an end-to-end performance guarantee for the learned downstream predictors.
Rather, the Gram-matrix-shape component of $D_{\mathrm{BR}}(i,j)$ is shown to upper-bound discrepancies between normalized transmit beam responses.
This result explains why $D_{\mathrm{BR}}(i,j)$ is a meaningful self-supervised relation for transmitter-side CSI representation learning.

For a unit-norm transmit beam $\mathbf{w}$, define the normalized transmit beam response of sample $\mathcal{X}_i$ at subcarrier $f$ as
\begin{equation}
    \bar{\gamma}_i[f;\mathbf{w}]
    =
    \mathbf{w}^{H}\bar{\mathbf{R}}_i[f]\mathbf{w}.
    \label{eq:normalized_beam_response}
\end{equation}
For notational compactness, define the Gram-matrix-shape difference between samples $\mathcal{X}_i$ and $\mathcal{X}_j$ as
\begin{equation}
    \mathbf{A}_{ij}[f]
    =
    \bar{\mathbf{R}}_i[f]
    -
    \bar{\mathbf{R}}_j[f].
    \label{eq:gram_shape_difference}
\end{equation}
Since $\bar{\mathbf{R}}_i[f]$ and $\bar{\mathbf{R}}_j[f]$ are Hermitian, $\mathbf{A}_{ij}[f]$ is also Hermitian.
The following results use the shape component $D_{\mathrm{shape}}(i,j)$.
The gain component in $D_{\mathrm{BR}}(i,j)$ is not required for the normalized-response bounds; since $\lambda_g\geq 0$, it follows that $D_{\mathrm{shape}}(i,j)\leq D_{\mathrm{BR}}(i,j)$.

\begin{figure*}[ht]
    \centering
    \includegraphics[width=0.97\linewidth]{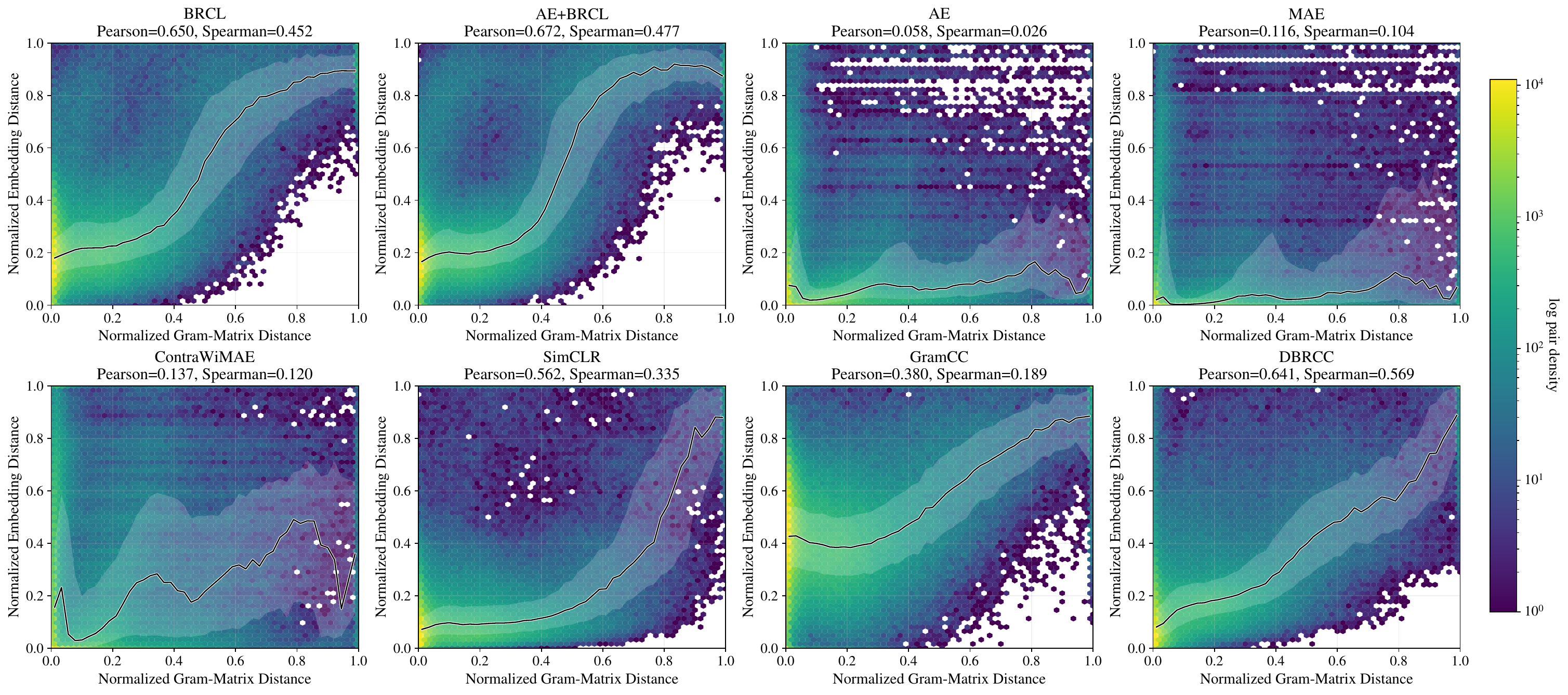}
    \caption{Pairwise alignment between Gram-matrix-domain and embedding-domain $\ell_2$ distances. Each hexbin plot shows normalized distances over sampled CSI pairs, with color indicating the log-scaled pair density. Median trends with interquartile ranges and Pearson/Spearman correlations are reported in each panel.}
    \label{fig_visual1}
    \vspace{-6pt}
\end{figure*}

\begin{lemma}[Normalized beam-response bound]
\label{lem:beam_response_bound}
For any collection of unit-norm transmit beams $\{\mathbf{w}_f\}_{f=1}^{N_f}$,
\begin{equation}
    \frac{1}{N_f}
    \sum_{f=1}^{N_f}
    \left|
    \bar{\gamma}_i[f;\mathbf{w}_f]
    -
    \bar{\gamma}_j[f;\mathbf{w}_f]
    \right|^2
    \leq
    D_{\mathrm{shape}}(i,j)
    \leq
    D_{\mathrm{BR}}(i,j).
    \label{eq:beam_response_preservation_bound}
\end{equation}
\end{lemma}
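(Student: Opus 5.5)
The plan is to bound each subcarrier term separately by the corresponding Frobenius-norm term in $D_{\mathrm{shape}}(i,j)$, then average over $f$. At a fixed subcarrier $f$, linearity of the quadratic form in the matrix gives
\begin{equation*}
    \bar{\gamma}_i[f;\mathbf{w}_f]-\bar{\gamma}_j[f;\mathbf{w}_f]
    =
    \mathbf{w}_f^{H}\mathbf{A}_{ij}[f]\mathbf{w}_f,
\end{equation*}
with $\mathbf{A}_{ij}[f]$ defined in \eqref{eq:gram_shape_difference}. Since $\mathbf{A}_{ij}[f]$ is Hermitian, this quantity is real, so the absolute value is unambiguous.

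The key step is the per-subcarrier inequality $|\mathbf{w}^{H}\mathbf{A}\mathbf{w}|\le\|\mathbf{A}\|_F$ for Hermitian $\mathbf{A}$ and unit-norm $\mathbf{w}$. There are two routes to it.

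First, write $\mathbf{w}^{H}\mathbf{A}\mathbf{w}=\operatorname{tr}(\mathbf{A}\mathbf{w}\mathbf{w}^{H})=\langle \mathbf{A},\mathbf{w}\mathbf{w}^{H}\rangle_F$. Apply Cauchy--Schwarz for the Frobenius inner product, and use $\|\mathbf{w}\mathbf{w}^{H}\|_F=\|\mathbf{w}\|_2^2=1$.

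Alternatively, use the eigendecomposition $\mathbf{A}=\sum_k\lambda_k\mathbf{e}_k\mathbf{e}_k^{H}$. This gives $|\mathbf{w}^{H}\mathbf{A}\mathbf{w}|\le\max_k|\lambda_k|=\|\mathbf{A}\|_2\le(\sum_k\lambda_k^2)^{1/2}=\|\mathbf{A}\|_F$.

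Either route, after squaring, yields
\begin{equation*}
    \bigl|\bar{\gamma}_i[f;\mathbf{w}_f]-\bar{\gamma}_j[f;\mathbf{w}_f]\bigr|^2\le\|\bar{\mathbf{R}}_i[f]-\bar{\mathbf{R}}_j[f]\|_F^2 .
\end{equation*}

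Averaging over $f=1,\ldots,N_f$ and comparing with \eqref{eq:brcl_shape_distance} gives the first inequality. The beams $\mathbf{w}_f$ may differ across subcarriers because the bound is applied termwise. The second inequality follows from \eqref{eq:brcl_distance}: $\lambda_g\ge 0$, and $D_{\mathrm{gain}}(i,j)\ge 0$ as an average of squares, so $D_{\mathrm{shape}}(i,j)\le D_{\mathrm{BR}}(i,j)$.

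There is no substantial obstacle. The only point needing care is the per-subcarrier matrix inequality, specifically that the unit-norm rank-one matrix $\mathbf{w}\mathbf{w}^{H}$ has unit Frobenius norm; this is what makes the Cauchy--Schwarz step tight enough to give exactly $D_{\mathrm{shape}}$ with constant one. The $\epsilon$ in the trace normalization plays no role, since the argument uses only that $\bar{\mathbf{R}}_i[f]$ and $\bar{\mathbf{R}}_j[f]$ are Hermitian.
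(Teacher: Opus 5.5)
Your proposal is correct and follows the paper's proof: the paper writes the difference as $\mathbf{w}_f^{H}\mathbf{A}_{ij}[f]\mathbf{w}_f$ and bounds it by $\|\mathbf{A}_{ij}[f]\|_2\le\|\mathbf{A}_{ij}[f]\|_F$, exactly as in your eigendecomposition route, then squares, averages over $f$, and invokes the definition of $D_{\mathrm{BR}}$ for the second inequality. Your Frobenius Cauchy--Schwarz alternative is an equally valid one-step variant that skips the spectral norm and does not even need Hermitian symmetry for the bound itself.
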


\noindent\emph{Proof:}
For each subcarrier $f$,
\begin{equation}
    \bar{\gamma}_i[f;\mathbf{w}_f]
    -
    \bar{\gamma}_j[f;\mathbf{w}_f]
    =
    \mathbf{w}_f^{H}
    \mathbf{A}_{ij}[f]
    \mathbf{w}_f.
\end{equation}
Since $\mathbf{A}_{ij}[f]$ is Hermitian,
\begin{equation}
    \left|
    \mathbf{w}_f^{H}
    \mathbf{A}_{ij}[f]
    \mathbf{w}_f
    \right|
    \leq
    \left\|
    \mathbf{A}_{ij}[f]
    \right\|_2
    \leq
    \left\|
    \mathbf{A}_{ij}[f]
    \right\|_F.
    \label{eq:rayleigh_bound}
\end{equation}
Squaring and averaging over $f$ gives the first inequality. 
The second inequality follows from the definition of $D_{\mathrm{BR}}(i,j)$. \hfill $\blacksquare$

\subsection{Beam-Transfer Regret Bound}
\label{subsec:beam_decision_regret}

The Gram-matrix-induced dissimilarity is next related to codebook-based beam transfer.
The codebook is introduced only for interpretation; BRCL itself does not require a predefined beam codebook during pretraining.

Let $\mathcal{W}$ denote a nonempty finite set of unit-norm candidate transmit beams.
For sample $\mathcal{X}_i$ at subcarrier $f$, choose an optimal beam
\begin{equation}
    \mathbf{w}_i^{\star}[f]
    \in
    \arg\max_{\mathbf{w}\in\mathcal{W}}
    \bar{\gamma}_i[f;\mathbf{w}].
    \label{eq:best_beam_vector}
\end{equation}
Since $\mathcal{W}$ is finite, the maximizer exists.
The same argument also applies to a compact continuous beam set because $\bar{\gamma}_i[f;\mathbf{w}]$ is continuous in $\mathbf{w}$.

\begin{theorem}[Beam-transfer regret bound]
\label{thm:beam_decision_regret}
For two CSI samples $\mathcal{X}_i$ and $\mathcal{X}_j$, define the normalized beam-transfer regret incurred by using $\mathbf{w}_i^{\star}[f]$ on sample $\mathcal{X}_j$ as
\begin{equation}
    \mathcal{R}_{i\rightarrow j}[f]
    =
    \max_{\mathbf{w}\in\mathcal{W}}
    \bar{\gamma}_j[f;\mathbf{w}]
    -
    \bar{\gamma}_j[f;\mathbf{w}_i^{\star}[f]].
    \label{eq:beam_regret_definition}
\end{equation}
Then
\begin{equation}
    0
    \leq
    \mathcal{R}_{i\rightarrow j}[f]
    \leq
    2
    \left\|
    \mathbf{A}_{ij}[f]
    \right\|_2
    \leq
    2
    \left\|
    \mathbf{A}_{ij}[f]
    \right\|_F.
    \label{eq:beam_regret_bound}
\end{equation}
Consequently, the average squared regret over OFDM subcarriers satisfies
\begin{equation}
    \frac{1}{N_f}
    \sum_{f=1}^{N_f}
    \mathcal{R}_{i\rightarrow j}^2[f]
    \leq
    4D_{\mathrm{shape}}(i,j)
    \leq
    4D_{\mathrm{BR}}(i,j).
    \label{eq:average_beam_regret_bound}
\end{equation}
\end{theorem}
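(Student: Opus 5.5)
The plan is a standard ``argmax perturbation'' argument, followed by the pointwise bound from the proof of Lemma~\ref{lem:beam_response_bound}. Fix $f$ and let $\mathbf{w}_j^{\star}[f]$ be a maximizer of $\bar{\gamma}_j[f;\cdot]$ over $\mathcal{W}$. This maximizer exists because $\mathcal{W}$ is finite, or compact with continuous objective. Then $\mathcal{R}_{i\rightarrow j}[f]=\bar{\gamma}_j[f;\mathbf{w}_j^{\star}[f]]-\bar{\gamma}_j[f;\mathbf{w}_i^{\star}[f]]$. Nonnegativity is immediate, since $\mathbf{w}_i^{\star}[f]\in\mathcal{W}$ and the first term is the maximum over $\mathcal{W}$.

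For the upper bound, I will add and subtract the responses of sample $i$ and split the regret into three terms:
\begin{equation*}
\mathcal{R}_{i\rightarrow j}[f]
=
\bigl(\bar{\gamma}_j[f;\mathbf{w}_j^{\star}]-\bar{\gamma}_i[f;\mathbf{w}_j^{\star}]\bigr)
+
\bigl(\bar{\gamma}_i[f;\mathbf{w}_j^{\star}]-\bar{\gamma}_i[f;\mathbf{w}_i^{\star}]\bigr)
+
\bigl(\bar{\gamma}_i[f;\mathbf{w}_i^{\star}]-\bar{\gamma}_j[f;\mathbf{w}_i^{\star}]\bigr).
\end{equation*}
The middle term is $\leq 0$ by optimality of $\mathbf{w}_i^{\star}[f]$ for sample $i$. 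The first and third terms are of the form $\pm\mathbf{w}^{H}\mathbf{A}_{ij}[f]\mathbf{w}$ for unit-norm $\mathbf{w}$. By the Hermitian Rayleigh-quotient bound \eqref{eq:rayleigh_bound}, each has absolute value at most $\|\mathbf{A}_{ij}[f]\|_2$. Together these give $\mathcal{R}_{i\rightarrow j}[f]\leq 2\|\mathbf{A}_{ij}[f]\|_2$. The final inequality $\|\cdot\|_2\leq\|\cdot\|_F$ is the standard spectral-versus-Frobenius norm comparison, already used in the lemma.

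For the averaged statement, I square the per-subcarrier bound. Squaring is legitimate because $0\leq\mathcal{R}_{i\rightarrow j}[f]\leq 2\|\mathbf{A}_{ij}[f]\|_F$, which gives $\mathcal{R}^2_{i\rightarrow j}[f]\leq 4\|\mathbf{A}_{ij}[f]\|_F^2$. Averaging over $f$ and using definition \eqref{eq:brcl_shape_distance} yields $4D_{\mathrm{shape}}(i,j)$. Finally, $D_{\mathrm{shape}}\leq D_{\mathrm{BR}}$ holds because $\lambda_g\geq 0$ and $D_{\mathrm{gain}}\geq 0$.

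No step is a serious obstacle. The only point needing care is the telescoping decomposition: the optimality of $\mathbf{w}_i^{\star}$ must be applied to sample $i$, not $j$, so that the cross term has a definite sign. One could sharpen the factor $2$ by noting that the two perturbation terms involve different beams, but the stated bound does not require this.
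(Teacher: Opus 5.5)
Your proposal is correct and is essentially the paper's proof. The paper uses the uniform Rayleigh-quotient bound $|\bar{\gamma}_i[f;\mathbf{w}]-\bar{\gamma}_j[f;\mathbf{w}]|\leq\delta_{ij}[f]$ to replace both $\max_{\mathbf{w}\in\mathcal{W}}\bar{\gamma}_j[f;\mathbf{w}]$ and $\bar{\gamma}_j[f;\mathbf{w}_i^{\star}[f]]$ by their sample-$i$ counterparts at a cost of $\delta_{ij}[f]$ each, which is your three-term telescoping with the maximizer $\mathbf{w}_j^{\star}[f]$ left implicit, and then squares and averages the same way; your explicit check that $\mathcal{R}_{i\rightarrow j}[f]\geq 0$, which justifies the squaring step, is a detail the paper leaves unstated.
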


\noindent\emph{Proof:}
For any $\mathbf{w}\in\mathcal{W}$, the Rayleigh quotient bound gives
\begin{equation}
    \left|
    \bar{\gamma}_i[f;\mathbf{w}]
    -
    \bar{\gamma}_j[f;\mathbf{w}]
    \right|
    \leq
    \left\|
    \mathbf{A}_{ij}[f]
    \right\|_2.
    \label{eq:uniform_beam_bound}
\end{equation}
Let
\begin{equation}
    \delta_{ij}[f]
    =
    \left\|
    \mathbf{A}_{ij}[f]
    \right\|_2.
\end{equation}
Then
\begin{align}
    \mathcal{R}_{i\rightarrow j}[f]
    &=
    \max_{\mathbf{w}\in\mathcal{W}}
    \bar{\gamma}_j[f;\mathbf{w}]
    -
    \bar{\gamma}_j[f;\mathbf{w}_i^{\star}[f]]
    \nonumber\\
    &\leq
    \max_{\mathbf{w}\in\mathcal{W}}
    \bar{\gamma}_i[f;\mathbf{w}]
    -
    \bar{\gamma}_i[f;\mathbf{w}_i^{\star}[f]]
    +
    2\delta_{ij}[f]
    \nonumber\\
    &=
    2\delta_{ij}[f],
\end{align}
because $\mathbf{w}_i^{\star}[f]$ maximizes $\bar{\gamma}_i[f;\mathbf{w}]$ over $\mathcal{W}$.
This proves \eqref{eq:beam_regret_bound}. 
Squaring \eqref{eq:beam_regret_bound}, averaging over $f$, and using
$\|\mathbf{A}_{ij}[f]\|_2\leq\|\mathbf{A}_{ij}[f]\|_F$
give \eqref{eq:average_beam_regret_bound}. \hfill $\blacksquare$

\textit{Relation to OFDM-averaged beam selection:}
Although Theorem~\ref{thm:beam_decision_regret} is stated for subcarrier-wise beam decisions, Jensen's inequality implies that, for any fixed unit-norm codebook beam, the squared discrepancy between the corresponding OFDM-averaged normalized beam responses is also bounded by $D_{\mathrm{shape}}(i,j)\leq D_{\mathrm{BR}}(i,j)$.
Therefore, this response-level bound is consistent with the OFDM-averaged codebook beam-selection metric used in the experiments, while it should not be interpreted as a guarantee on classification accuracy, rate ratio, MU-MIMO scheduling utility, or temporal prediction performance.

\subsection{Implication for BRCL}
\label{subsec:theory_implication}

Lemma~\ref{lem:beam_response_bound} and Theorem~\ref{thm:beam_decision_regret} provide a physical rationale for the Gram-matrix-induced target used in BRCL.
If $D_{\mathrm{BR}}(i,j)$ is small, then the normalized transmit responses of $\mathcal{X}_i$ and $\mathcal{X}_j$ are close in the sense of the bounds above.
In particular, transferring a codebook beam selected for one sample to the other incurs a bounded normalized beam-response regret.

This interpretation justifies using $D_{\mathrm{BR}}(i,j)$ to construct soft positive relations during contrastive pretraining.
The instance-level loss preserves augmentation-based same-sample invariance, whereas the Gram-matrix-relational loss encourages the embedding space to reflect transmitter-side beam-response similarity among distinct CSI samples.
Thus, the analysis provides a physical justification of the proposed pretraining target, rather than an end-to-end guarantee on downstream task performance.
It shows that the self-supervised relation is aligned with a fundamental transmit beam-response property of MIMO channels.

Fig.~\ref{fig:dbr_regret_bound} provides an empirical consistency check for this interpretation.
For both global random CSI pairs and local $D_{\mathrm{BR}}$-neighbor pairs, the measured average squared beam-transfer regret remains below the theoretical upper bound.
Moreover, the regret statistics tend to increase with $D_{\mathrm{BR}}$, indicating that CSI pairs with smaller beam-response dissimilarity are more likely to exhibit lower beam-transfer regret in the generated channel population.



\section{Experiments}
\label{sec:experiments}

\subsection{Dataset and Simulation Setup}
Following prior work, ray-tracing-based CSI is generated using the DeepMIMO dataset framework~\cite{ALKHATEEB2019DEEPMIMO}, which enables fair comparison through a common data-generation pipeline, facilitates reproducibility through its open framework, and supports geographical generalization evaluation across diverse ray-traced urban environments.

The training set consists of 2.5 million CSI samples generated at 3.5~GHz from 56 city scenarios and is used for both self-supervised pretraining and downstream-head training.\footnote{The training scenarios are \textit{Amsterdam, ASU Campus, Athens, Bangkok, Barcelona, Beijing, Boston, Brussels, Cairo, Cape Town, Charlotte, Chicago, Denver, Dubai, Edinburgh, Florence, Fort Worth, Fujiyoshida, Granada, Hatsukaichi, Helsinki, Hong Kong, Indianapolis, Istanbul, Jerusalem, Kyoto, Havana, Lisbon, Los Angeles, Madrid, Marrakesh, Mumbai, New Delhi, New York, North Jakarta, Oklahoma City, Philadelphia, Phoenix, Reykjavik, Rio de Janeiro, Rome, San Francisco, San Nicolas, Saint Petersburg, Santa Clara, Santiago, Seattle, Seoul, Singapore, Stockholm, Sumida City, Sydney, Taipei, Taito City, Toronto, and Warsaw.}}
Downstream testing uses a separate test-only set of 0.55 million CSI samples from 10 unseen scenarios disjoint from the training scenarios.\footnote{The unseen test scenarios are \textit{Austin, Centro, Columbus, Dallas, Gurbchen, Houston, Miami, Montreal, Prague, and San Diego.}}
Unless otherwise stated, the MIMO-OFDM system uses $N_f=32$, $\Delta f=30$~kHz, 960-kHz bandwidth, $N_t=32$ half-wavelength ULA BS antennas, $N_r=4$ receive antennas, and up to 20 dominant paths.
For temporal tasks, CSI sequences are sampled along mobility trajectories at one-coherence-time intervals, $T_c=13.0$~ms, corresponding to a maximum speed of approximately 10~km/h at 3.5~GHz.

\subsection{Evaluation Protocol and Baselines}
\label{subsec:evaluation_protocol}

Unless otherwise specified, the following evaluation protocol is used throughout the experiments.
Task-specific heads are trained using labeled samples from the 56 training scenarios and evaluated on the 10 unseen test scenarios, yielding a cross-scenario transfer setting. To assess label efficiency, only a fraction of the downstream training labels is used, with labeled-data ratios $\rho_{\mathrm{label}}\in\{1\%,5\%,10\%,20\%,100\%\}$.
For each pretrained encoder, the encoder parameters are frozen, and only the task-specific modules are trained using the available labeled samples.
Three representative transmitter-side tasks are considered: beam selection, MU-MIMO user selection, and future beam selection.
All experiments are conducted over datasets generated using five independent random user-location seeds, and the reported metrics are averaged over the resulting datasets.

All methods use a 256-dimensional representation and are configured with approximately 3M trainable parameters.
For CNN-based methods, the BRCL variants and CNN-based baselines share the same 16-layer CNN backbone.
The compared pretraining methods are summarized as follows.

\vspace{0.3em}
\noindent\textbf{BRCL}: The proposed method trained with instance-level and Gram-matrix-relational losses, without reconstruction regularization. The hyperparameters are set to $\alpha=0.9$, $\lambda_g=0.5$, $\beta_{\mathrm{rec}}=0$, $\tau_p=0.05$, and $\tau_z=0.2$.

\vspace{0.3em}
\noindent\textbf{AE+BRCL}: A reconstruction-regularized variant of BRCL that adds an $\ell_2$ reconstruction loss to the BRCL objective. The hyperparameters are set to $\alpha=0.9$, $\lambda_g=0.5$, and $\beta_{\mathrm{rec}}=0.5$.

\vspace{0.3em}
\noindent\textbf{AE}: An autoencoder-based CSI pretraining baseline following CsiNet~\cite{WEN2028CSINET}, trained only with reconstruction loss.

\vspace{0.3em}
\noindent\textbf{MAE}: A masked autoencoder baseline for CSI representation learning~\cite{WANG2025MASKED}. It uses an 8-layer Transformer encoder with $4\times4$ patches and a masking ratio of 0.75.

\vspace{0.3em}
\noindent\textbf{ContraWiMAE}: A contrastive masked autoencoder baseline~\cite{GULER2025MASKED}, using the same Transformer configuration as MAE.

\vspace{0.3em}
\noindent\textbf{SimCLR}: An instance-level contrastive learning baseline that forms positive pairs from two augmented views of the same CSI sample and treats other mini-batch samples as negatives~\cite{CHEN2020SIMCLR,GULER2025MASKED}.

\vspace{0.3em}
\noindent\textbf{GramCC}: A channel-charting-based baseline that defines inter-sample relations using a Gram-matrix-based similarity metric~\cite{STUDER2018CCHT}.

\vspace{0.3em}
\noindent\textbf{DBRCC}: A channel-charting-inspired baseline that defines inter-sample relations using the proposed beam-response dissimilarity $D_{\mathrm{BR}}$.

\subsection{Representation Analysis}
\label{subsec:representation_analysis}

Fig.~\ref{fig_visual1} quantifies how well the learned embedding distances preserve the Gram-matrix-domain geometry of CSI samples.
BRCL and its reconstruction-regularized variant, AE+BRCL, achieve the highest Pearson correlations, indicating that the proposed objectives better preserve the metric-scale structure induced by Gram-matrix-based channel similarity, which is consistent with their strong performance in the downstream tasks.
Meanwhile, DBRCC achieves the highest Spearman correlation, showing that the proposed beam-response dissimilarity $D_{\mathrm{BR}}$ provides a more effective rank-order similarity measure than the plain Gram-matrix $\ell_2$ distance.


\begin{figure}[h]
\centering
    \includegraphics[width=0.94\linewidth]{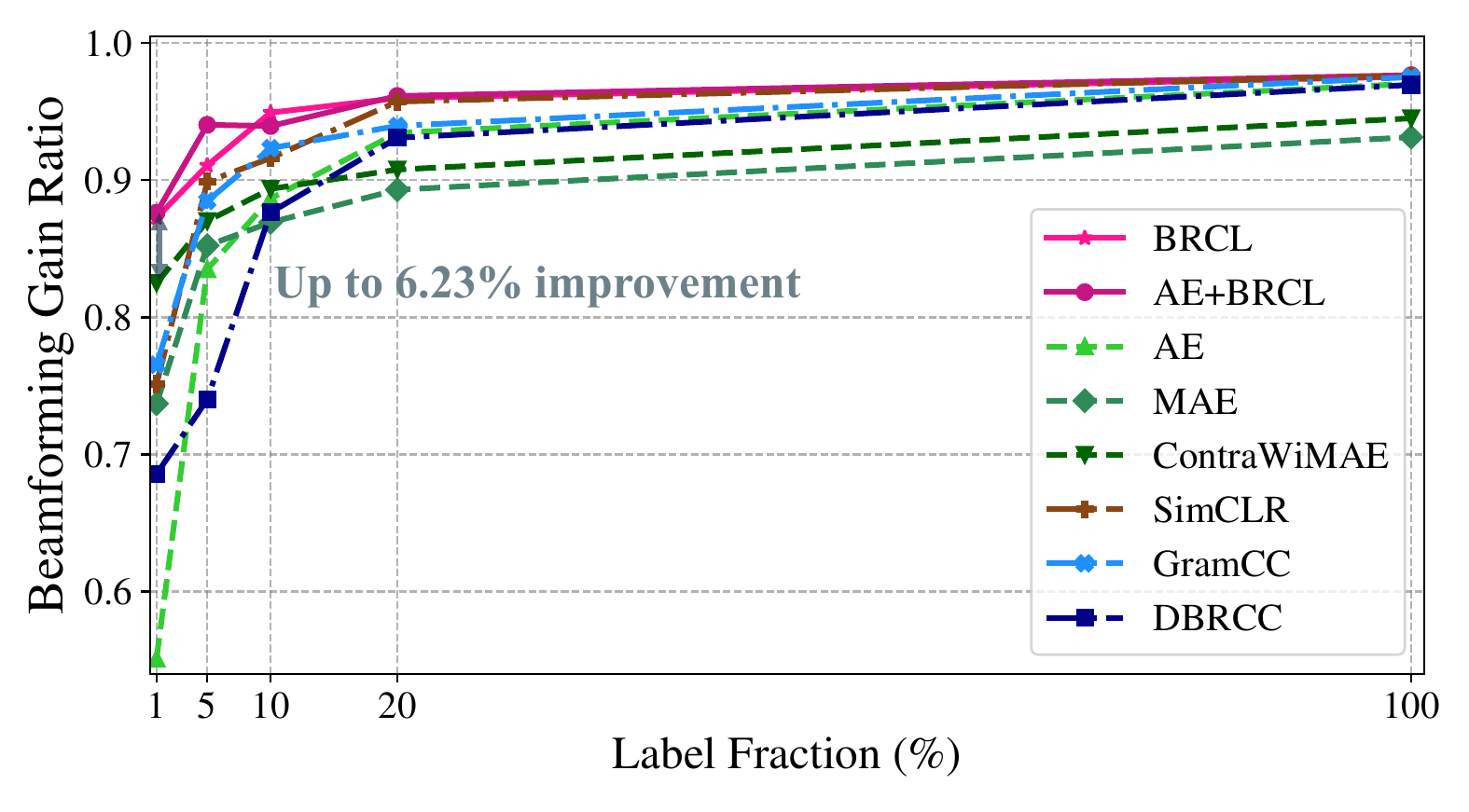}
    \vspace{-5pt}
    \caption{
    Beamforming gain ratio for beam selection in unseen scenarios under different labeled-data fractions. Results are averaged over five independent user-location seeds.
    }
    \vspace{-8pt}
    \label{fig:DS1}
\end{figure}

\subsection{Downstream Task 1: Beam Selection in Unseen Scenarios}
\label{subsec:downstream_beam_selection}

The first downstream evaluation considers out-of-scenario beam selection.
This task directly measures whether the pretrained encoder captures transmitter-side information that is useful for beam management in propagation environments not observed during self-supervised pretraining.

Let \(\mathcal{W}=\{\mathbf w_b\}_{b=1}^{B}\) denote a ULA-DFT transmit codebook with \(B\) beams, where \(\mathbf w_b\in\mathbb{C}^{N_t\times 1}\) and \(\|\mathbf w_b\|_2=1\).
For each CSI sample \(X_j=\{\mathbf H_j[f]\}_{f=1}^{N_f}\), the codebook-domain beam gain of beam \(b\) is computed from the raw CSI as
\[
    g_b(j)
    =
    \frac{1}{N_f}
    \sum_{f=1}^{N_f}
    \left\|
        \mathbf H_j[f]\mathbf w_b
    \right\|_2^2 .
\]
The ground-truth beam label is then defined as
\[
    b_j^\star
    =
    \arg\max_{1\leq b\leq B} g_b(j).
\]
This label-generation rule depends only on the codebook-domain transmit-beam response and does not require SNR calibration.
Input normalization is applied to the encoder input.

For representation-based prediction, each CSI sample is passed through the pretrained encoder.
A linear classification head is then trained on top of the representation to predict the beam logits over the $B$ codebook beams.
The predicted beam is selected as the beam with the largest output logit, and the head is trained using the cross-entropy loss with the ground-truth beam label ($b_j^\star$).

This downstream task is evaluated using a \(B=32\) beam codebook under different labeled-data fractions.
The beamforming gain ratio is defined as the beamforming gain achieved by the predicted beam normalized by that of the optimal beam.
As shown in Fig.~\ref{fig:DS1}, BRCL and AE+BRCL achieve the highest or near-highest gain ratios across all labeled-data fractions.
The advantage is particularly pronounced in the label-scarce regime; with only 1\% labeled data, AE+BRCL improves the beamforming gain ratio by approximately 6.23\% over the strongest non-BRCL baseline.
These results demonstrate that beam-response-aware pretraining yields label-efficient and transferable representations for beam management in unseen scenarios.


\begin{figure}[t]
\captionsetup[subfloat]{farskip=2pt}
\centering
        \subfloat[Spearman rank correlation]{\includegraphics[width=.95\linewidth]{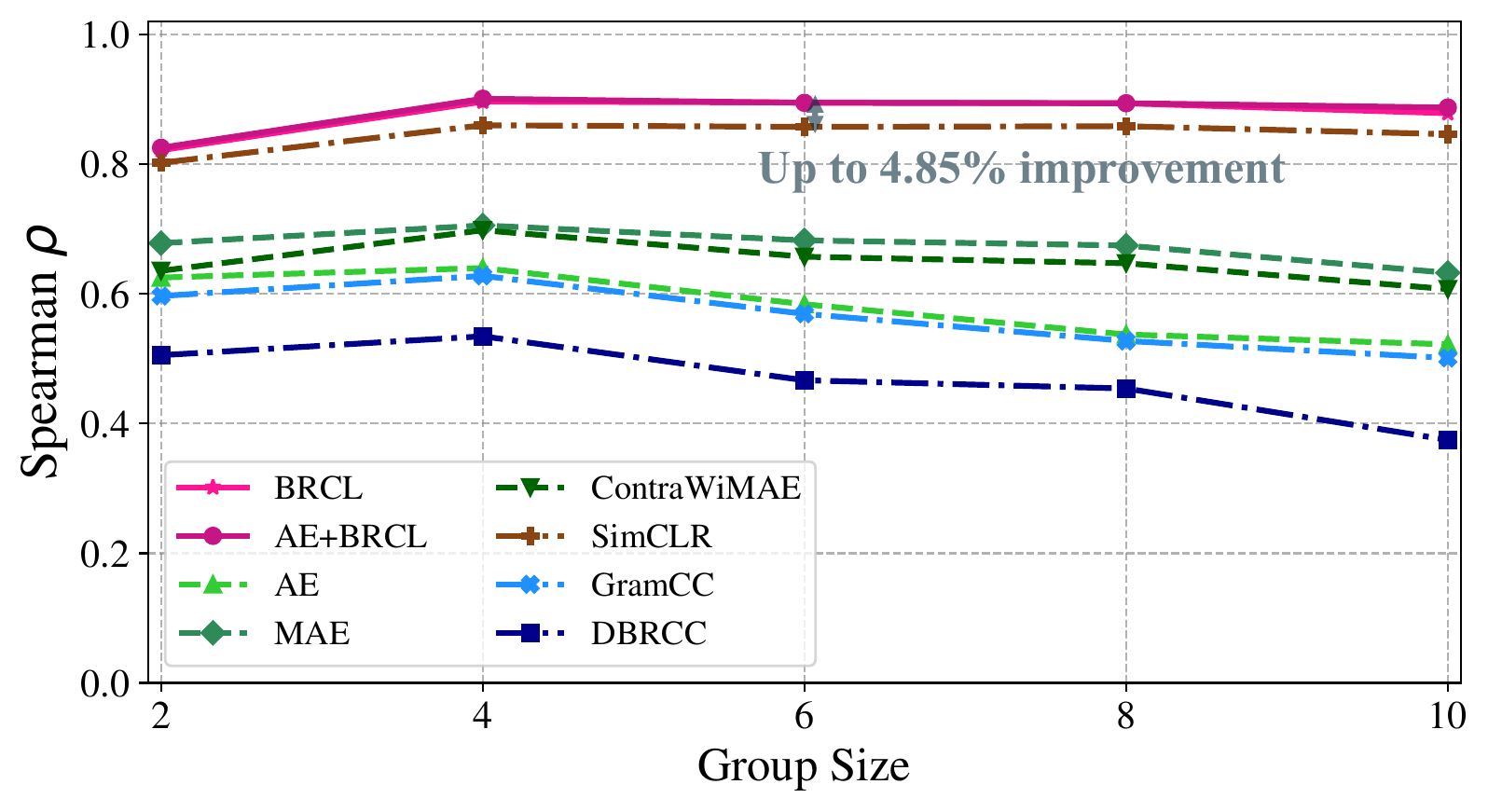}}\\
        \subfloat[Top sum-rate ratio]{\includegraphics[width=.95\linewidth]{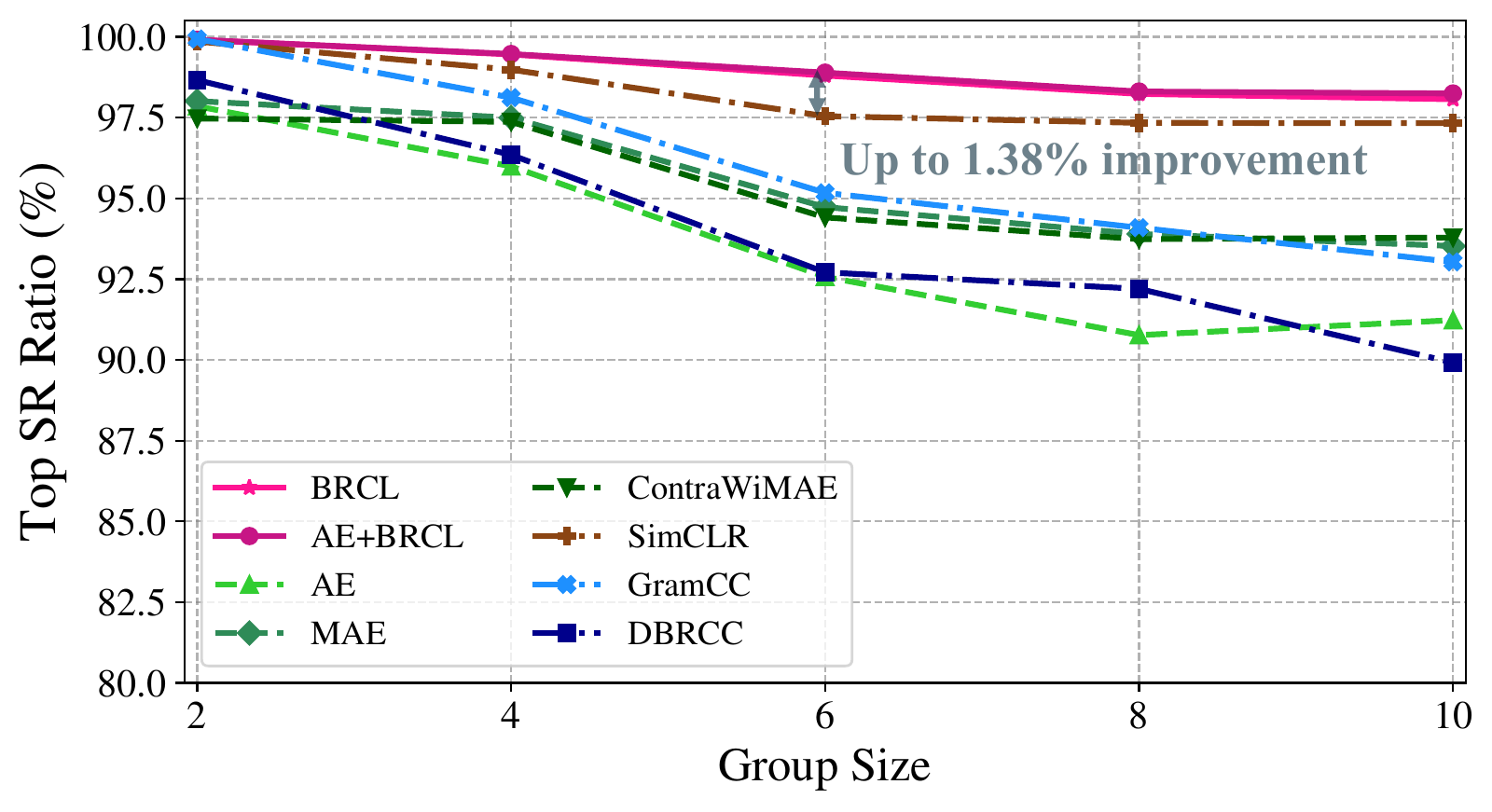}}
        \\
\vspace{-3pt}
\caption{MU-MIMO user selection performance as a function of group size. The SNR is $10$ dB, and results are averaged over labeled-data fractions and five independent user-location seeds. Higher Spearman correlation and sum-rate ratio indicate better grouping quality.
}
\vspace{-7pt}
\label{fig:DS2_2}
\end{figure}

\begin{figure*}[ht!]
\captionsetup[subfloat]{farskip=2pt}
\vspace{-7pt}
\centering
        \subfloat[$\rho_{\mathrm{label}} = 1\%$]{\includegraphics[width=.33\linewidth]{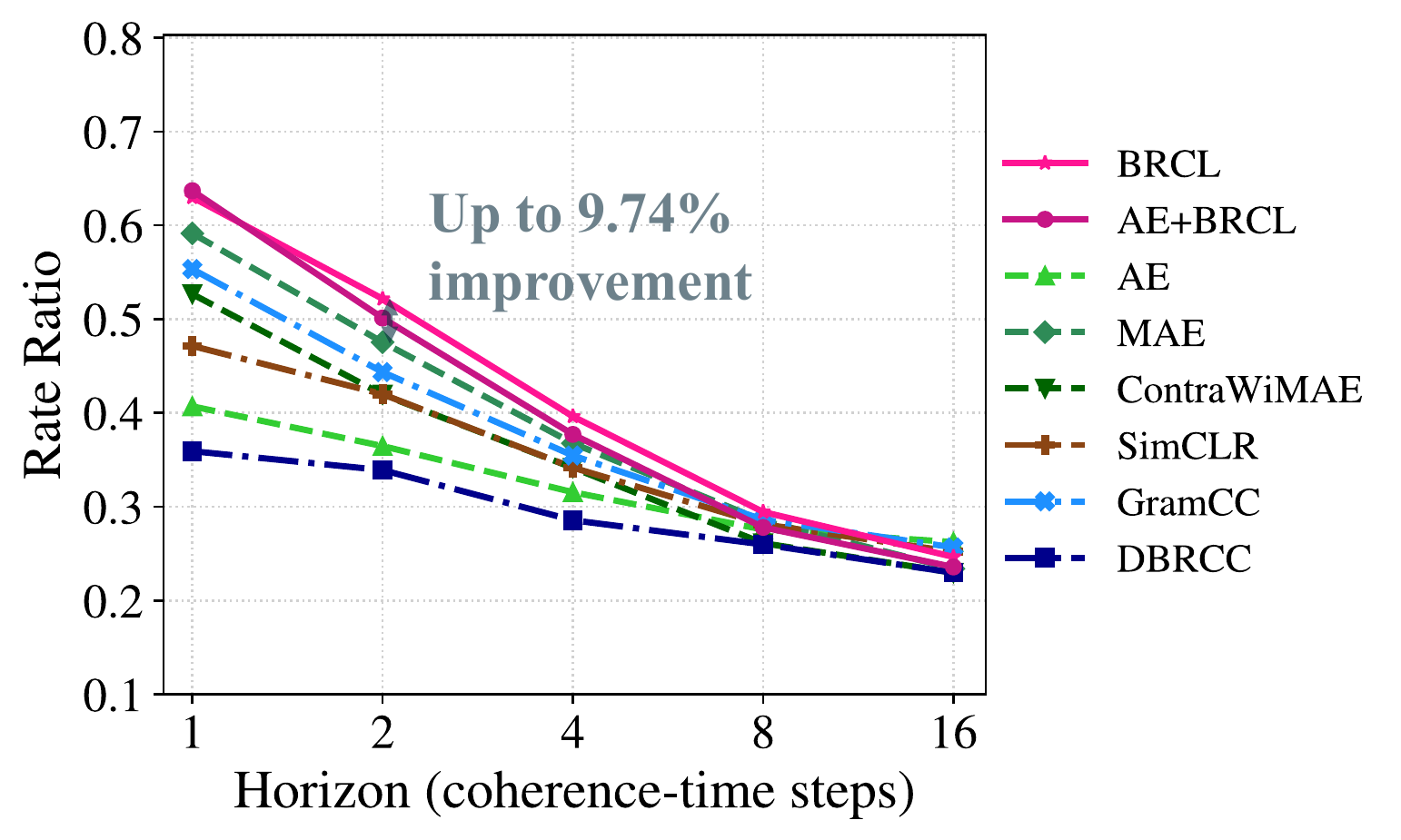}}
        \subfloat[$\rho_{\mathrm{label}} = 5\%$]{\includegraphics[width=.33\linewidth]{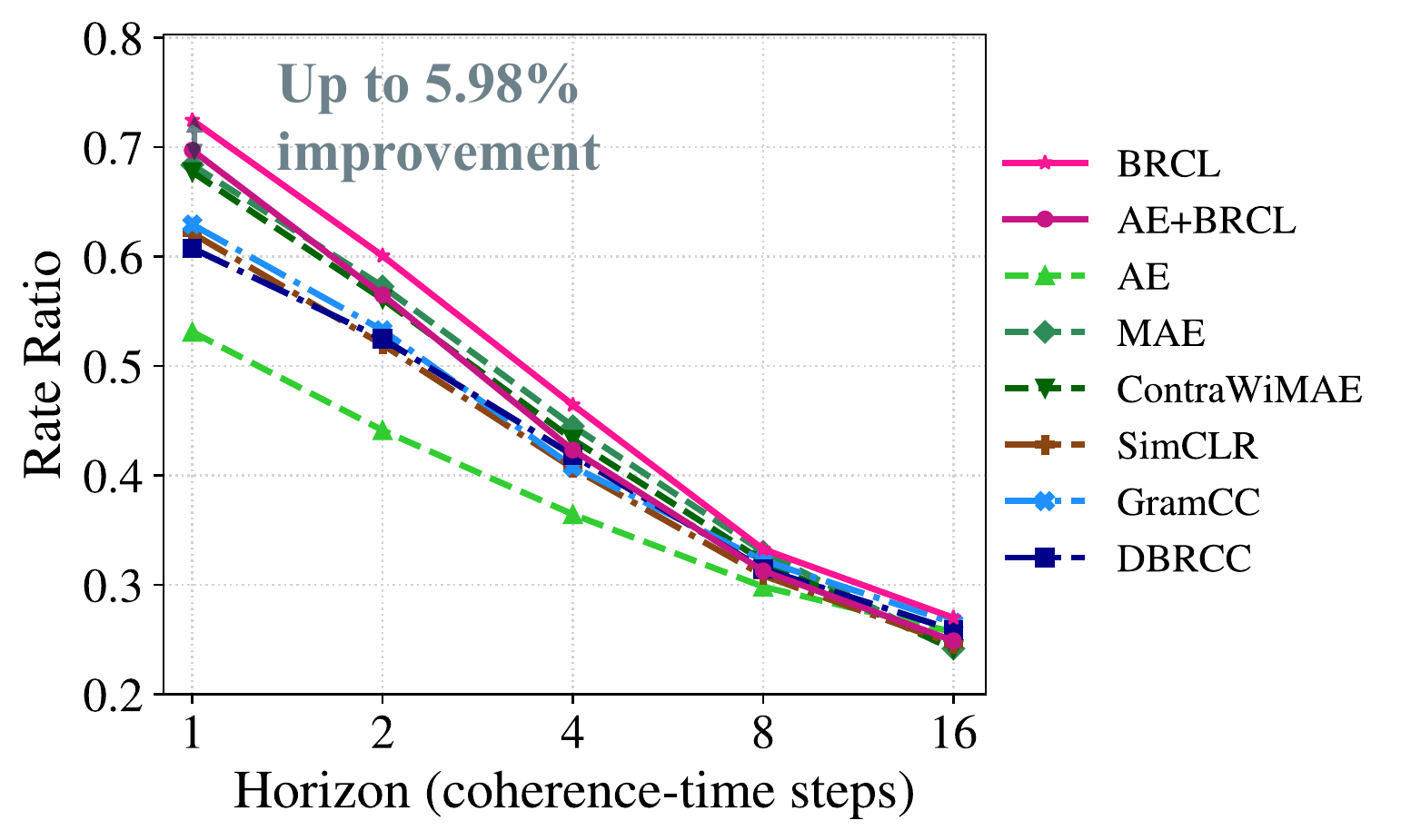}}
        \subfloat[$\rho_{\mathrm{label}} = 10\%$]{\includegraphics[width=.33\linewidth]{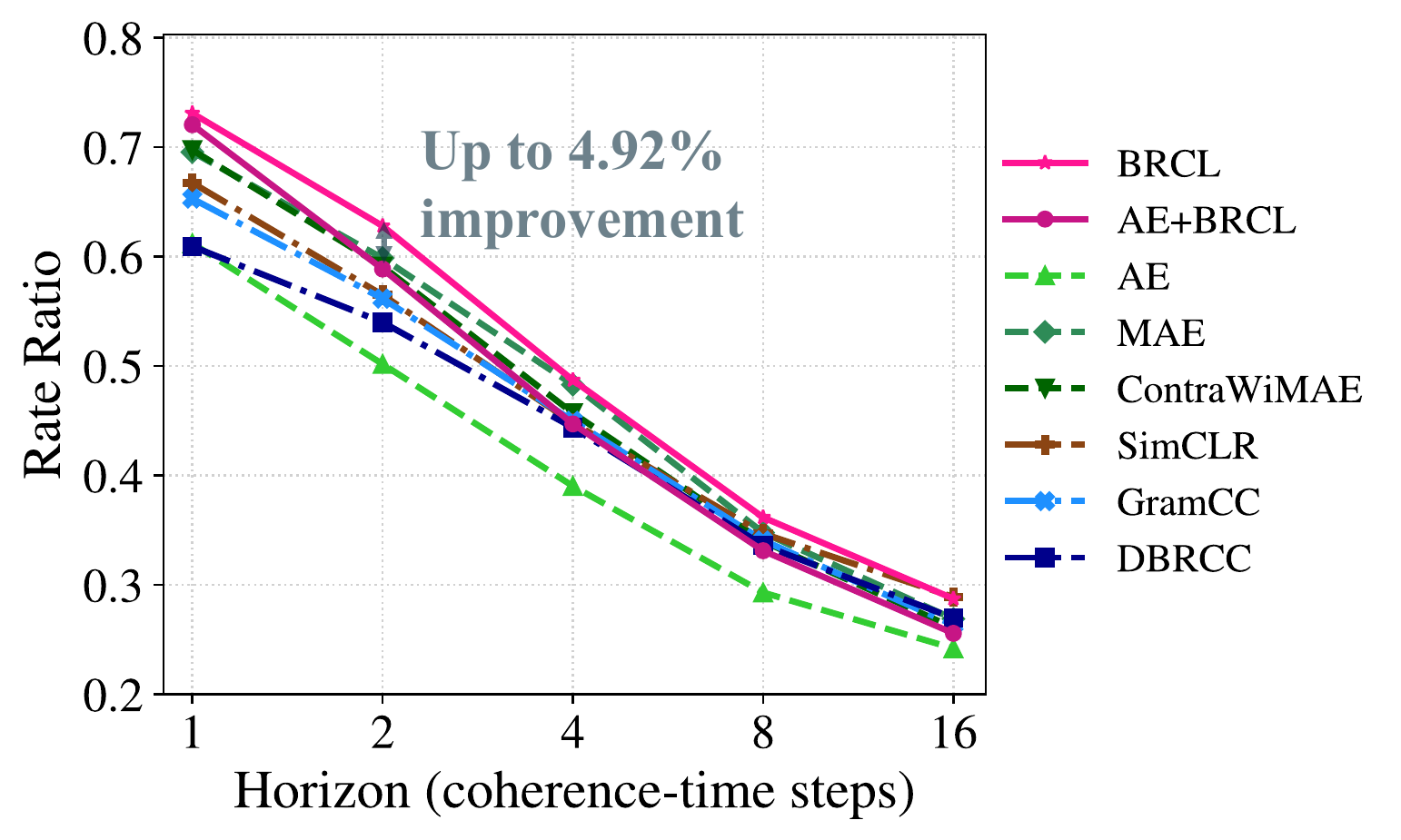}}\\
        \subfloat[$\rho_{\mathrm{label}} = 20\%$]{\includegraphics[width=.33\linewidth]{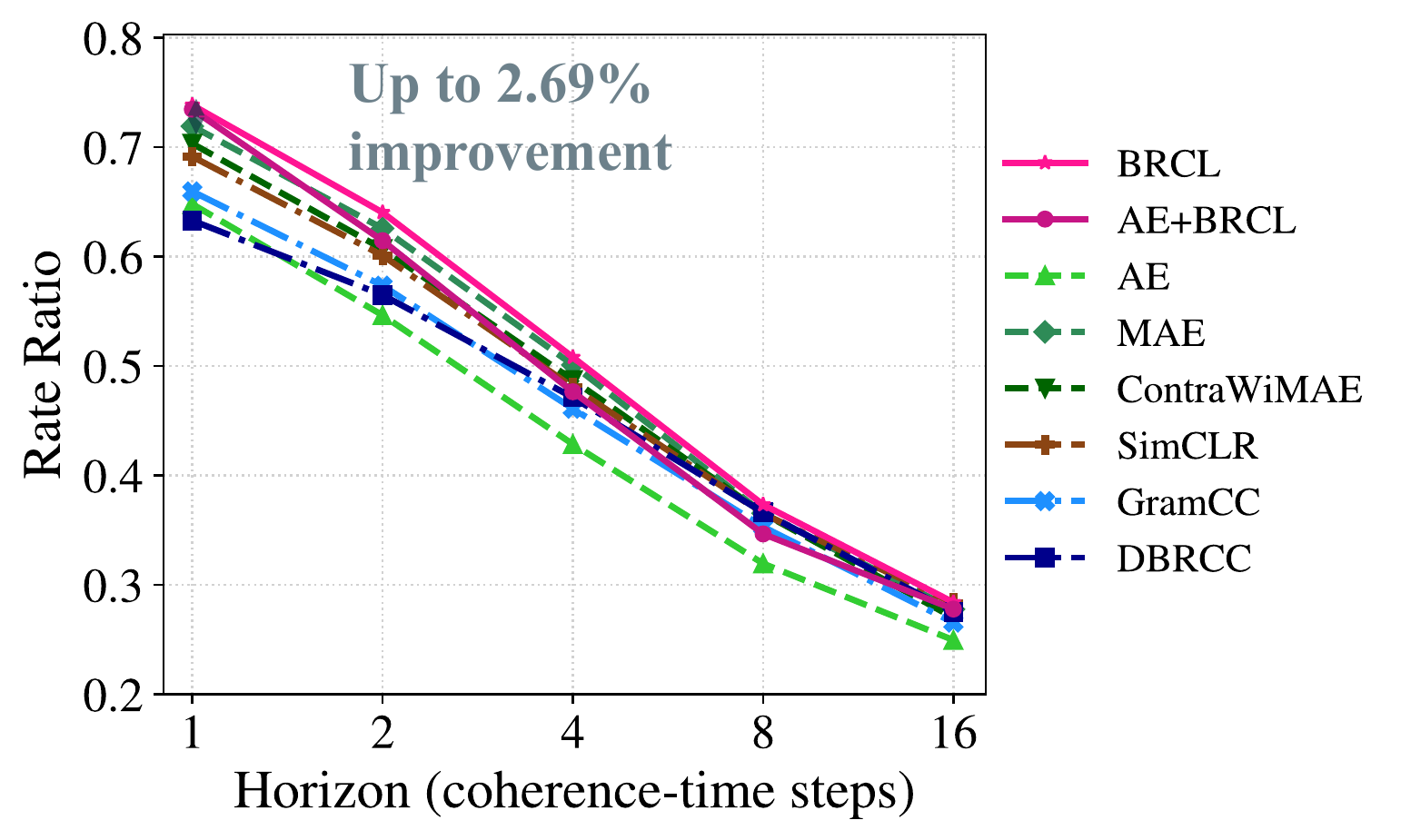}}
        \subfloat[$\rho_{\mathrm{label}} = 100\%$]{\includegraphics[width=.33\linewidth]{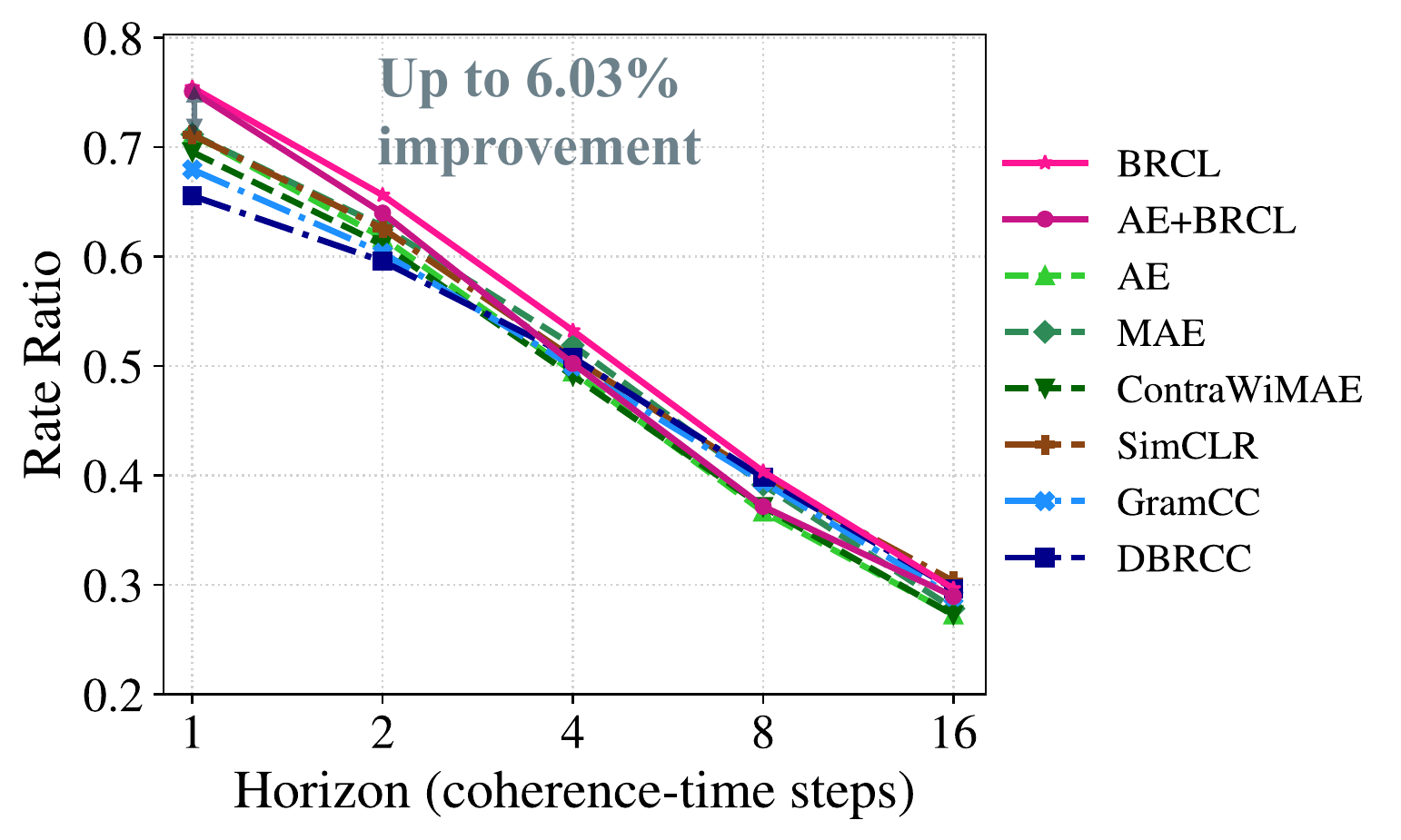}}
        \\
\vspace{-3pt}
\caption{
Future beam selection performance across prediction horizons and labeled-data fractions, measured by rate ratio. Results are averaged over five independent user-location seeds.
}
\vspace{-2pt}
\label{fig:DS3}
\end{figure*}

\subsection{Downstream Task 2: MU-MIMO User Selection}
\label{subsec:downstream_mu_mimo_selection}

The next downstream task considers multi-user MIMO (MU-MIMO) user selection~\cite{YOO2006UE,YOO2011UE}.
Unlike single-user beam selection, MU-MIMO scheduling depends on the joint spatial separability of multiple users.
A high-quality user set should contain users whose channels can be simultaneously served with limited inter-user interference after precoding.
This task therefore examines whether the pretrained encoder preserves the multi-user spatial structure required for transmitter-side scheduling, beyond individual beam preferences.

\begin{table}[t]
\centering
\caption{Averaged performance comparison of different representation learning schemes for the MU-MIMO user selection downstream task over five independent user-location seeds.}
\label{tab:ds2_metrics}
\resizebox{0.88\linewidth}{!}{
\begin{tabular}{lcccc}
\toprule[1pt]
Method 
& \shortstack{Top SR\\Ratio (\%) $\uparrow$} 
& \shortstack{Spearman\\$\rho$ $\uparrow$} 
& \shortstack{Good-group\\AUC $\uparrow$} 
& \shortstack{Top\\Recall $\uparrow$} \\
\midrule
BRCL 
& \underline{98.90} & \underline{0.877} & \underline{\textbf{0.927}} & \underline{0.614} \\
AE+BRCL 
& \underline{\textbf{98.97}} & \underline{\textbf{0.881}} & \underline{0.924} & \underline{\textbf{0.617}} \\
AE 
& 93.68 & 0.582 & 0.829 & 0.403 \\
MAE 
& 95.53 & 0.675 & 0.865 & 0.468 \\
ContraWiMAE 
& 95.36 & 0.649 & 0.858 & 0.469 \\
SimCLR 
& 98.21 & 0.845 & 0.895 & 0.544 \\
GramCC 
& 96.07 & 0.564 & 0.826 & 0.452 \\
DBRCC 
& 93.97 & 0.467 & 0.775 & 0.372 \\
\bottomrule[1pt]
\end{tabular}
}
\vspace{2pt}
\begin{minipage}{0.99\linewidth}
\small
*Best: \underline{\textbf{bold}}, second-best: \underline{underline}.
\end{minipage}
\vspace{-9pt}
\end{table}

Candidate user-index sets are constructed from unseen scenarios as
\begin{equation}
    \mathcal{U}
    =
    \{j_1,j_2,\ldots,j_{K_{\mathrm{u}}}\},
\end{equation}
where $K_{\mathrm{u}}$ denotes the number of co-scheduled users. The corresponding CSI samples are denoted by
\begin{equation}
    \mathcal{X}_{\mathcal{U}}
    =
    \{\mathcal{X}_{j_k}\}_{k=1}^{K_{\mathrm{u}}}.
\end{equation}
The number of selected users is set to $K_{\mathrm{u}}\in\{2,4,6,8,10\}$ in the experiments. Users in each candidate set are sampled from the same scenario so that the candidates represent physically meaningful MU-MIMO scheduling groups.

The ground-truth utility of each candidate user set is computed using regularized zero-forcing (RZF) precoding~\cite{PEEL2005RZF}. To focus the utility label on spatial compatibility rather than large-scale channel gain, each user's channel is normalized before utility computation. For each user $j$, a single-stream effective transmit-side channel is formed by applying the dominant receive combining vector $\mathbf{c}_j[f]\in\mathbb{C}^{N_r}$ to the MIMO channel $\mathbf{H}_j[f]\in\mathbb{C}^{N_r\times N_t}$:
\begin{equation}
    \mathbf{h}_j[f]
    =
    \mathbf{c}_j[f]^H
    \mathbf{H}_j[f]
    \in
    \mathbb{C}^{1\times N_t}.
\end{equation}
The normalized effective channel is denoted by $\bar{\mathbf{h}}_j[f]$. For each candidate set $\mathcal{U}$, the normalized effective channels are stacked at each subcarrier, and an RZF precoder with column normalization is used to evaluate the achievable sum-rate. The resulting average sum-rate over OFDM subcarriers is used as the utility label. Unless otherwise stated, the utility is evaluated at a fixed receiver-side SNR of $10$ dB after per-user normalization.

For representation-based prediction, the pretrained encoder is shared across all users in the candidate set.
Each CSI sample $\mathcal{X}_{j_k}$ is encoded into a user representation, and the resulting set of representations is fed into a permutation-invariant set-level prediction head.
The head outputs both a scheduling utility score and a high-quality user-set probability.

For each $K_{\mathrm{u}}$, candidate sets in the top $25\%$ of the RZF-based utility distribution among the corresponding training candidates are labeled as high-quality sets.
The set-level head is trained with a weighted combination of utility regression, high-quality-set classification, and an optional spatial-compatibility regularizer based on effective transmit-direction overlap.
This objective promotes representations that support utility ranking, high-quality set identification, and interference-aware user separability.
Fig.~\ref{fig:DS2_2} shows the MU-MIMO user selection performance under different group sizes.
BRCL and AE+BRCL consistently achieve strong performance, with clearer advantages for larger user groups, where accurate estimation of spatial compatibility becomes more challenging.
Table~\ref{tab:ds2_metrics} further shows that the BRCL variants consistently rank first or second across all metrics, with AE+BRCL leading the top sum-rate ratio, Spearman correlation, and top recall, and BRCL achieving the best good-group AUC.
These results indicate that beam-response-aware pretraining preserves CSI representations that transfer effectively to interference-aware MU-MIMO user selection under limited labeled supervision.

\subsection{Downstream Task 3: Future Beam Selection}
\label{subsec:downstream3_channel_prediction}

The learned CSI representation is further evaluated for future beam selection to assess whether it captures temporally predictive channel information.
While Downstream Task 1 considers beam selection from the current CSI sample, this task extends the same codebook-domain criterion to a temporal setting:
given a short history of past CSI snapshots, the model predicts the future optimal transmit beam at multiple prediction horizons.
This formulation evaluates whether the pretrained representation preserves channel information that remains useful for anticipating future transmitter-side decisions, without requiring full complex CSI reconstruction.

Let \(L\) denote the number of past CSI snapshots used as input and let \(\Delta\) denote the temporal spacing between consecutive snapshots.
Given the CSI history $ \{X_{t-(L-1)\Delta},\ldots,X_{t-\Delta},X_t\},$ the model predicts the optimal codebook beam at future horizons \(q\in\mathcal Q\).
Using the same ULA-DFT transmit codebook \(\mathcal W=\{\mathbf w_b\}_{b=1}^{B}\) as in the beam-selection task, the future beam gain at horizon \(q\) is computed as
\[
    g_b(t+qT_c)
    =
    \frac{1}{N_f}
    \sum_{f=1}^{N_f}
    \left\|
        \mathbf H_{t+qT_c}[f]\mathbf w_b
    \right\|_2^2,
\]
and the corresponding target beam is
\[
    b^{\star}(t+qT_c)
    =
    \arg\max_{1\leq b\leq B} g_b(t+qT_c),
    \qquad q\in\mathcal Q.
\]
Here, $T_c$ denotes the coherence time, such that $q$ represents the prediction horizon in units of $T_c$.
The experiments use $\mathcal{Q}={1,2,4,8,16}$, a history length of $L=4$, and a beam codebook of size $B=64$.
In the test set, these horizons correspond to average look-ahead times of approximately \(13.0\), \(26.1\), \(52.2\), \(104.3\), and \(208.7\) ms, respectively.
Rate-oriented metrics are evaluated at an SNR of \(10\) dB.

For prediction, each CSI snapshot in the history is first passed through the pretrained encoder to obtain a sequence of representations.
A gated recurrent unit (GRU) \cite{CHO2014GRU} is then used as a lightweight temporal adapter to extract short-term temporal dependencies from the encoded CSI history.
The resulting temporal state is passed to an MLP head that outputs future beam logits for all horizons in \(\mathcal Q\).

Fig.~\ref{fig:DS3} shows the future beam selection performance across prediction horizons and labeled-data fractions.
BRCL achieves the best performance for horizons up to 8 under all labeled-data fractions, with a particularly clear advantage in the label-scarce regime.
Although the rate ratio generally improves as more labels are used, the cross-scenario setting still limits performance even with full labeled data.
These results indicate that BRCL learns temporally informative CSI representations that transfer effectively to future beam-domain decisions, especially when labeled future-beam data are scarce.

\begin{table}[t]
\centering
\caption{
Ablation correlation results between embedding-space $\ell_2$ distance and $D_{\mathrm{BR}}$ on a held-out validation split. Unless swept, default parameters are $\alpha=0.9$, $\lambda_g=0.5$, $\beta_{\mathrm{rec}}=0.5$, $K=16$, $\tau_p=0.05$, $\tau_z=0.2$, $d_z=512$, and batch size 256.
}
\label{tab:ablations}
\resizebox{0.64\linewidth}{!}{
\begin{tabular}{cccc}
\toprule
Parameter & Value & Pearson $\uparrow$ & Spearman $\rho$ $\uparrow$ \\
\midrule
\multirow[c]{5}{*}{$\alpha$}
& 0.1 & 0.6352 & 0.4402 \\
& 0.3 & 0.6228 & 0.4006 \\
& 0.5 & 0.5788 & 0.3589 \\
& 0.7 & \underline{0.6370} & \textbf{\underline{0.4506}} \\
& 0.9 & \textbf{\underline{0.6401}} & \underline{0.4494} \\
\midrule
\multirow[c]{5}{*}{$\lambda_g$}
& 0.1 & 0.5723 & 0.3382 \\
& 0.3 & 0.6235 & \textbf{\underline{0.3853}} \\
& 0.5 & \textbf{\underline{0.6507}} & 0.3641 \\
& 0.7 & \underline{0.6392} & \underline{0.3676} \\
& 0.9 & 0.5962 & 0.3599 \\
\midrule
\multirow[c]{5}{*}{$\beta_{\mathrm{rec}}$}
& 0.1 & 0.6483 & 0.4334 \\
& 0.3 & 0.6806 & \textbf{\underline{0.5210}} \\
& 0.5 & \underline{0.6823} & \underline{0.5024} \\
& 0.7 & \textbf{\underline{0.6881}} & 0.4486 \\
& 0.9 & 0.6234 & 0.4595 \\
\bottomrule
\end{tabular}
}
\vspace{2pt}
\begin{minipage}{0.94\linewidth}
\:\:\:\:\:\; \small *Best: \underline{\textbf{bold}}, second-best: \underline{underline}.
\end{minipage}
\vspace{-8pt}
\end{table}

\subsection{Ablation Study}
\label{subsec:ablation}

Table~\ref{tab:ablations} examines the sensitivity of BRCL to the main loss weights $\alpha$, $\lambda_g$, and $\beta_{\mathrm{rec}}$ on a held-out validation split.
Higher Pearson and Spearman correlations indicate better preservation of the metric and rank-order structures induced by $D_{\mathrm{BR}}$.

The results show that the relational weight $\alpha$ is important for aligning the embedding geometry with the beam-response dissimilarity.
The gain weight $\lambda_g$ controls the contribution of absolute channel-gain information, while the reconstruction weight $\beta_{\mathrm{rec}}$ balances raw CSI recovery and beam-response-aware geometry preservation.
These results motivate the use of $\alpha=0.9$ and $\lambda_g=0.5$, while $\beta_{\mathrm{rec}}=0.5$ is adopted as a balanced reconstruction weight in the main experiments.

\section{Conclusion}
\label{sec:conclusion}



This paper proposed beam-response contrastive learning (BRCL), a self-supervised CSI representation learning framework for transmitter-side MIMO tasks.
Unlike reconstruction-based or augmentation-driven pretraining, BRCL defines inter-sample similarity through the transmit-side Gram matrix, which determines the channel response to transmit beams.
By converting the resulting beam-response dissimilarity into a soft relational target, BRCL learns CSI embeddings that preserve precoding-relevant channel geometry without task-specific labels.

This work showed that the proposed beam-response dissimilarity admits a physical interpretation through its connection to transmit beam-response discrepancy and beam-transfer regret.
Experiments on large-scale ray-tracing MIMO-OFDM datasets demonstrated that BRCL improves representation alignment, label efficiency, and downstream performance across beam selection, MU-MIMO user selection, and future beam selection in unseen scenarios.
These results underscore the potential of communication-oriented similarity as a foundation for transferable CSI representation learning and motivate future research on task-adaptive or universal optimal similarity metrics and QoE-driven objectives for AI-enabled MIMO systems.


\appendices

\bibliographystyle{IEEEtran}
\bibliography{ref}

\end{document}